\documentclass[letterpaper, 10 pt, conference]{ieeeconf}
\IEEEoverridecommandlockouts 

\usepackage{cite}
\usepackage{amsmath, amssymb, amsfonts}
\usepackage{graphicx}
\usepackage{textcomp}
\usepackage{xcolor}
\usepackage{algorithm}
\usepackage{algpseudocode}
\usepackage{tikz}
\usetikzlibrary{shapes,positioning,calc,arrows.meta}
\usepackage{longtable}
\usepackage{multirow}
\usepackage{etoolbox}
\usepackage{hyperref} 
\usepackage{subfig}
\usepackage{subcaption}
\usepackage{mathtools}

\usepackage{booktabs}

\usepackage{enumerate}

\usepackage{bm}

\makeatletter
\def\@opargbegintheorem#1#2#3{%
  \@IEEEtmpitemindent\itemindent
  \topsep 0pt\rmfamily\trivlist
  \item[\hskip\labelsep{\indent\itshape #1\ #2}]%
  \itemindent\@IEEEtmpitemindent
  {\itshape (#3): }\ignorespaces
}
\makeatother

\newtheorem{lemma}{Lemma}
\newtheorem{theorem}{Theorem}

\newtheorem{remark}{Remark}
\newtheorem{definition}{Definition}

\newtheorem{example}{Example}

\AtEndEnvironment{problem}{\hfill$\Box$}
\AtEndEnvironment{proposition}{\hfill$\Box$}
\AtEndEnvironment{lemma}{\hfill$\Box$}
\AtEndEnvironment{theorem}{\hfill$\Box$}
\AtEndEnvironment{assumption}{\hfill$\Box$}
\AtEndEnvironment{condition}{\hfill$\Box$}
\AtEndEnvironment{remark}{\hfill$\Box$}
\AtEndEnvironment{corollary}{\hfill$\Box$}
\AtEndEnvironment{definition}{\hfill$\Box$}
\AtEndEnvironment{example}{\hfill$\Box$}

\newcommand{\R}{\mathbb{R}}

\DeclareMathOperator{\tr}{tr}

\providecommand{\SO}{\ensuremath{\mathbf{SO}}}

\begin{document}
\title{\LARGE \bf Complementary Filtering on \SO(3) for Attitude Estimation with Scalar Measurements}

\author{A.~Melis, S.~Berkane,~\IEEEmembership{Senior Member,~IEEE}, R.~Mahony,~\IEEEmembership{Fellow,~IEEE},
and T.~Hamel,~\IEEEmembership{Fellow,~IEEE}
\thanks{Alessandro Melis and Tarek Hamel are with I3S, CNRS, Université Côte d'Azur, Sophia Antipolis, France. Tarek Hamel is also with the Institut Universitaire de France
        (\{melis,thamel\}@i3s.unice.fr).}
\thanks{Soulaimane Berkane is with the Department of Computer Science and Engineering, Université du Québec en Outaouais (UQO), QC J8X 3X7, Canada (soulaimane.berkane@uqo.ca).}
\thanks{Robert Mahony is with the Systems Theory and Robotics Group, School of Engineering, Australian National University, Australia, (robert.mahony@anu.edu.au)}
\thanks{*This research is supported in part by the ASTRID ANR project ASCAR, the ``Grands Fonds Marins'' Project Deep-C, the Natural Sciences and Engineering Research Council of Canada (NSERC) under the Discovery Grant RGPIN-2020-04759, and the Fonds de recherche du Québec (FRQ).}
\thanks{\copyright~2026 IEEE. Personal use of this material is permitted.
Permission from IEEE must be obtained for all other uses, in any current
or future media, including reprinting/republishing this material for
advertising or promotional purposes, creating new collective works, for
resale or redistribution to servers or lists, or reuse of any copyrighted
component of this work in other works.}
}

\maketitle

\begin{abstract}
This paper proposes a complementary filter on $\SO(3)$ for attitude estimation from scalar measurements corresponding to projections of known inertial vectors onto body-frame sensing directions. The observer evolves directly on $\SO(3)$ and employs a constant-gain innovation tailored to the scalar-output structure. Under suitable persistence-of-excitation conditions, almost-global asymptotic stability is established when at least three inertial vectors are measured along a common body-frame direction. For configurations involving only two scalar measurements, sufficient conditions for asymptotic convergence are derived together with an explicit characterization of the region of attraction. Numerical simulations illustrate the proposed results.\end{abstract}
\begin{keywords}
Scalar Measurements, Observers for Nonlinear Systems, Complementary Filtering on $\SO(3)$, Region of Attraction.
\end{keywords}
\section{Introduction}

Attitude estimation is a fundamental problem in navigation, guidance, and control, and has been extensively studied since the seminal formulation of Wahba's problem \cite{wahba1965least}; see \cite{crassidis2007survey} for a survey. In dynamic attitude estimation, gyroscope and vector measurements play complementary roles: gyroscopes provide angular-rate information for attitude propagation. In contrast, body-frame observations of known inertial directions provide the reference information needed to correct integration drift. Since rigid-body attitude evolves on the rotation group $\SO(3)$, nonlinear observers formulated directly on this manifold provide a natural estimation framework that preserves the rotation-matrix structure by construction.

Among geometric attitude observers, the complementary filter proposed in \cite{Mahony_Hamel_Pflimlin} constitutes a landmark contribution. The observer evolves directly on $\SO(3)$, with an innovation term constructed from discrepancies between measured body-frame directions and their predicted counterparts. It is computationally efficient and achieves almost-global asymptotic stability when at least two non-collinear inertial directions are available. Extensions to more general measurement configurations, including single time-varying inertial directions, were developed in \cite{Trumpf2012,grip2011attitude}. Further developments and stability analyses of complementary filters on $\SO(3)$ can be found in \cite{zlotnik2016nonlinear,zlotnik2016exponential,berkane2017design}. 
Despite these advances, these observers rely on complete three-dimensional vector measurements. This requirement excludes reduced-sensing configurations in which only partial vector information or scalar attitude constraints are available.

Such reduced sensing configurations may arise for different reasons. With conventional vector sensors, individual measurement channels may become unreliable because of axis-dependent noise, calibration errors, environmental disturbances, saturation, or hardware faults. In other cases, the sensing modality itself provides only scalar or low-dimensional information. For instance, cosine-type coarse Sun sensors provide a scalar voltage proportional to the projection of the Sun direction onto the sensor normal \cite{markley2014fundamentals}. In satellite navigation, individual GNSS Doppler observations provide scalar measurements of the satellite-receiver relative velocity projected along the corresponding line-of-sight direction \cite{kaplan2017understanding}. Scalar measurements can also arise from the combination of heterogeneous sensing modalities, \textit{e.g.}, landmark-based altitude constraints or tilt relations derived from barometric and range measurements \cite{alnahhal2025scalar}. Such sensing configurations motivate estimation
methods capable of exploiting incomplete, scalar, or low-dimensional
measurements directly, without first reconstructing complete vector
observations.

Early approaches to attitude estimation with reduced sensing focus on incomplete vector measurements and rely on deterministic geometric methods. The works \cite{Lee2009,Ahn2010} develop reconstruction algorithms based on partially observed reference vectors, while \cite{Shi2023} studies the orientation information that can be recovered from different combinations of three vector components. These methods operate instantaneously, without exploiting attitude kinematics or gyroscope measurements. Depending on the measurement configuration, they may yield multiple candidate attitudes or only partial orientation information. Resolving these ambiguities requires additional measurements or a selection criterion.

More recently, \cite{alnahhal2025scalar} proposed a new framework for observer design in the presence of scalar measurements. The associated measurement model accommodates not only incomplete vector observations but also measurements arising from alternative attitude constraints. Within this framework, the authors designed a globally asymptotically stable Kalman-Bucy-type filter. However, the approach relies on embedding $\mathbf{SO}(3)$ in $\mathbb{R}^9$, which leads to strong persistence of excitation requirements to guarantee uniform observability. Within the Riccati-observer framework \cite{hamel2018riccati}, an observer evolving directly on $\SO(3)$ was proposed in \cite{melis2026scalar}, achieving local exponential stability under relaxed persistence of excitation conditions, while also incorporating gyroscope bias estimation. Nevertheless, the stability guarantees remain local, and the observer still requires solving a Riccati equation.

In this work, we revisit the complementary-filter framework on $\SO(3)$ developed in \cite{Mahony_Hamel_Pflimlin,Trumpf2012} under the scalar measurement model introduced in \cite{alnahhal2025scalar}, in place of conventional full-vector measurements. Unlike the Riccati-based observers in \cite{alnahhal2025scalar,melis2026scalar}, the proposed observer evolves directly on $\SO(3)$ and uses a constant gain. Its innovation incorporates pseudoinverse normalizations that compensate for the geometry of the inertial and body-frame sensing directions and preserve the standard trace-based Lyapunov structure. Under suitable persistence-of-excitation conditions, we establish almost-global asymptotic stability when at least three inertial vectors are measured along a common body-frame direction. For two-scalar configurations (either two inertial vectors measured along a common body-frame direction or one inertial vector measured along two body-frame directions), we derive sufficient conditions for convergence from an explicitly characterized region of attraction. To the best of the authors' knowledge, these are the first stability guarantees for constant-gain complementary filtering based solely on scalar measurements.

\section{Preliminary Material}\label{section: preliminary material}
\subsection{Notation}
The Euclidean norm of a vector $x \in \mathbb{R}^n$ is denoted by $|x|$ and the 2-norm of a matrix $X\in\R^{n\times m}$ is denoted by $\|X\|$. The trace of matrix $A\in \R^{n\times n}$ is denoted by $\tr(A)$. The identity matrix is denoted by $I_n \in \mathbb{R}^{n \times n}$. For any symmetric matrix $A\in \R^{n\times n}$, $\lambda_1(A)\leq\cdots\leq\lambda_n(A)$ denote the eigenvalues of $A$ ordered increasingly. The unit $(n-1)$-sphere is represented as $\mathbb S^{n-1} := \{v \in \mathbb{R}^{n} \mid |v| = 1\}$. For any vector $v\in \mathbb S^2$, we denote by $\pi_v:=I_3-vv^\top$ the orthogonal projection onto the plane orthogonal to $v$. For any matrix $A\in\R^{m\times n}$, $\operatorname{rank}(A)$ denotes its rank, and $A^\dagger$ its Moore--Penrose pseudoinverse, defined as the unique matrix satisfying $AA^\dagger A = A,\ A^\dagger AA^\dagger=A^\dagger, \ (AA^\dagger)^\top = AA^\dagger$, and $(A^\dagger A)^\top = A^\dagger A$.
The angle between two vectors $v_1,v_2\in\R^3$ is denoted by $\angle(v_1,v_2)$. Two possibly time-varying vectors $v_1(t),v_2(t)$ are uniformly non-collinear if there exists $c>0$ such that $ |v_1\times v_2|\geq c$ for all $t\geq0$.

The special orthogonal group of 3D rotations is given by
$\SO(3) := \{R \in \mathbb{R}^{3\times3} \mid R^{\top} R = RR^\top = I_3, \det(R) = 1\}$. Its tangent space at $R\in \SO(3)$ is denoted by $T_R\SO(3)$ and
its Lie algebra is $\mathfrak{so}(3) := \{[\Omega]_\times \in \mathbb{R}^{3\times3} \mid \Omega \in \mathbb{R}^3\}$,
where $(\cdot)_\times: \mathbb{R}^3 \to \mathfrak{so}(3)$ denotes the skew-symmetric matrix operator associated with the cross product, which satisfies $[\Omega]_\times v=\Omega\times v$, for all $\Omega,v\in\mathbb{R}^3$. For any matrix $A\in\R^{3\times 3}$ and $R\in\SO(3)$, $\Pi_{T_R \SO(3)}(A)=(A-RA^\top R)/2$ denotes the orthogonal projection of $A$ onto $T_R\SO(3)$ with respect to the Frobenius inner product.
    
The following definition is used throughout the paper. 
\begin{definition}[\cite{Trumpf2012}]
    A collection of locally square-integrable vectors $v_1(t), \dots, v_\ell(t) \in \mathbb{R}^3$, with $\ell \in\mathbb{N}\setminus\{0\}$, is said to be persistently exciting if there exist $\delta,\mu>0$ such that
    \begin{align}\label{condition: PE notation}
	   \lambda_2\left(\frac1\delta \int_t^{t+\delta} \mathcal V(s)\mathcal V^\top(s)\,ds\right)\geq \mu,\quad \forall t\geq 0,
    \end{align}
    where $\mathcal V(t) = [\, v_1(t)\ \cdots\ v_\ell(t)\,]$.
\end{definition}
   
\subsection{System equations and measurements}
Consider a rigid body rotating with respect to an inertial frame $\{\mathcal{I}\}$. Let $\{\mathcal{B}\}$ be a body-fixed reference frame attached to the body and $R\in \SO(3)$ be the rotation matrix describing the orientation of frame $\{\mathcal{B}\}$ with respect to frame $\{\mathcal{I}\}$. We model the attitude kinematics of the rotating body as
\begin{align}\label{eq:kinematics}
    \dot R = R[\Omega]_\times,
\end{align}
where $\Omega\in\R^3$ is the angular velocity of $\{\mathcal{B}\}$ with respect to $\{\mathcal{I}\}$ expressed in $\{\mathcal{B}\}$, assumed to be known and typically provided by a tri-axial gyroscope. 

For the output measurements, we assume the body to be equipped with a suite of sensors providing the outputs $y_{i} \in \mathbb{R}^{n_i}$, $i=1,\cdots,p$, defined as
\begin{equation}\label{eq:general_output}
    y_i := \Lambda_i^\top R^\top b_i, \qquad \Lambda_i:=\begin{bmatrix}
        a_1^{(i)} &\cdots& a_{n_i}^{(i)}
    \end{bmatrix}\in\R^{3   \times n_i},
\end{equation}
which compose the full output measurement vector 
\begin{equation*}
    y:=(y_1^\top, \ \cdots,\ y_p^\top)\in \R^{\sum_{i=1}^pn_i}.
\end{equation*} 
Each $y_i$ represents the set of scalar outputs associated with the possibly time-varying known inertial vector $b_i \in \mathbb{R}^3$, collected by measuring the body-frame expression $R^\top b_i$ along $n_i$ ($n_i\geq 1$) body-frame vectors  ${a}_j^{(i)} \in \mathbb{R}^3$, $j = 1,\cdots, n_i$. Thus, the $j$th component of $y_i$ is
the scalar measurement
\[
    y_{ij} = (a_j^{(i)})^\top R^\top b_i.
\]
The classical full-vector measurement model is recovered by
setting $n_i=3$ and $\Lambda_i=I_3$, in which case
$
    y_i=R^\top b_i,
$
and the standard complementary-filter design applies directly
\cite{Mahony_Hamel_Pflimlin}. More generally, if
$\operatorname{rank}(\Lambda_i)=3$, the full vector $R^\top b_i$ can be
recovered from $y_i$, reducing the problem to the classical setting.

The focus of this work is the rank-deficient case
$\operatorname{rank}(\Lambda_i)<3$, which necessarily occurs when
$n_i\leq2$. In this case, $y_i$ provides only partial information
about $R^\top b_i$, and the full vector cannot be reconstructed from
$y_i$ alone. We therefore seek to construct a complementary filter
directly from these scalar measurements. Representative sensing
configurations covered by this model can be found in
\cite{alnahhal2025scalar,melis2026scalar}, with further illustrative
examples presented later.

\section{Main results}\label{section:observer design}
This section develops a complementary filter on $\SO(3)$ for scalar measurements and characterizes its convergence under different measurement configurations. Almost-global asymptotic stability is established when at least three inertial vectors are measured along a common body-frame direction under suitable persistence-of-excitation conditions. For configurations involving only two scalar measurements, sufficient conditions for asymptotic convergence are derived together with an explicit characterization of the region of attraction.

In particular, we consider a complementary observer evolving on $\SO(3)$ of the form
\begin{align}\label{eq:observer dynamics}
    \dot{\hat R} = \hat R[\Omega]_\times + [\Delta]_\times \hat R,\quad\hat R(0)\in\SO(3),
\end{align}
where $\hat R(t) \in \SO(3)$ denotes the attitude estimate at time $t$, $\Omega \in \mathbb{R}^3$ is the measured angular velocity, and $\Delta \in \mathbb{R}^3$ is an innovation term to be designed. Define the output error associated with \eqref{eq:general_output} as
\begin{equation}\label{eq:output_error_brute}
    \tilde{y}_i := \Lambda_i^\top \hat{R}^\top b_i - y_i 
\end{equation}
The proposed innovation term is defined as
\begin{align}\label{eq:innovation_brute}
    \Delta
    :=
    k\sum_{i=1}^p
    [S^\dagger b_i]_\times
    \hat R(\Lambda_i^\top)^\dagger\tilde y_i,
    \qquad
    S:=\sum_{i=1}^p b_i b_i^\top,
\end{align}
with $k>0$.
The structure of the proposed innovation is motivated by the
gradient-like observer design principle of \cite{lageman2010gradient}.
Consider first a single scalar measurement
\[
    y=a^\top R^\top b,
\]
where $a,b\in\R^3$ are known nonzero vectors, and define
\[
    \tilde y=a^\top\hat R^\top b-y.
\]
Consider the normalized squared output error
\begin{align}\label{eq:cost_function}
    f_y(\hat R)
    :=
    \frac{\tilde y^2}{|a|^2|b|^2},
    \qquad
    \nabla_{\hat R}f_y
    =
    \frac{2\tilde y}{|a|^2|b|^2}ba^\top .
\end{align}
Both expressions are invariant under arbitrary nonzero scaling of
$a$ or $b$. Following \cite{lageman2010gradient}, the Riemannian
gradient on $\SO(3)$ is obtained by orthogonally projecting the
Euclidean gradient onto $T_{\hat R}\SO(3)$:
\begin{align}\label{eq:proj_tangent_space}
    \Pi_{T_{\hat R}\SO(3)}
    (\nabla_{\hat R}f_y)
    &=
    \frac{
        \nabla_{\hat R}f_y
        -\hat R\nabla_{\hat R}f_y^\top\hat R
    }{2}
    \nonumber\\
    &=
    -\left[
        \frac{[b]_\times\hat R a}{|a|^2|b|^2}
        \tilde y
    \right]_\times\hat R .
\end{align}
Since
\[
    (a^\top)^\dagger=\frac{a}{|a|^2},
    \qquad
    (bb^\top)^\dagger b=\frac{b}{|b|^2},
\]
the innovation \eqref{eq:innovation_brute} reduces in this case to
\begin{align}\label{eq:innovation one scalar}
    \Delta
    =
    k
    \left[\frac{b}{|b|^2}\right]_\times
    \hat R\frac{a}{|a|^2}\tilde y,
    \qquad k>0.
\end{align}
Hence, the correction term $[\Delta]_\times\hat R$ is directed along
the negative Riemannian gradient of $f_y$.
The general innovation \eqref{eq:innovation_brute} extends this
single-measurement construction by using $(\Lambda_i^\top)^\dagger$
and $S^\dagger$ to account for the geometry of the body-frame sensing
vectors and inertial reference vectors, respectively. These pseudoinverse normalizations play a key role in the
Lyapunov analysis: under the conditions stated below, they ensure that
the standard trace potential on $\SO(3)$ is nonincreasing along the
attitude-error trajectories, thereby enabling the subsequent
convergence analysis.

The proposed innovation recovers the classical complementary
filter under full-vector measurements. Indeed, suppose that
$\Lambda_i=I_3$ and that the inertial vectors
$\{b_i\}_{i=1}^p$ are orthonormal. Then $S$ is the orthogonal
projector onto their span and $S^\dagger b_i=b_i$. Consequently,
\begin{align*}
    \Delta
    &=
    k\sum_{i=1}^p
    [b_i]_\times
    \hat R
    \big(\hat R^\top b_i-R^\top b_i\big)\\
    &=
    k\hat R
    \sum_{i=1}^p
    [R^\top b_i]_\times
    \hat R^\top b_i,
\end{align*}
which is the standard cross-product innovation
\cite{Mahony_Hamel_Pflimlin}.

\subsection{Stability Analysis}\label{section:observer anaylisis}
In this section, the main stability results for the proposed generalized attitude observer are presented. Define the right-invariant attitude error $\tilde R:=\hat R R^\top\in\SO(3)$ and its principal rotation angle $\tilde\theta:=\arccos\!\left(\frac{\tr(\tilde R)-1}{2}\right)\in[0,\pi]$. It follows from \eqref{eq:kinematics} and \eqref{eq:observer dynamics} that 
\begin{equation}\label{eq:closed-loop}
\dot{\tilde R}=[\Delta]_\times\tilde R.    
\end{equation}
We first consider $p\geq3$ inertial vectors whose body-frame representations are measured along the same set of $n\geq1$ body-frame vectors, collected in $\Lambda=[\,a_1\ \cdots\ a_n\,]$. The minimal configuration is $p=3$ and $n=1$, corresponding to the three scalar measurements $y_i=a_1^\top R^\top b_i$, $i=1,2,3$.

\begin{theorem}[At least three scalar measurements]
\label{theorem:3b}
Consider $p\geq3$ known inertial vectors $b_i(t)$, $i=1,\ldots,p$,
whose body-frame representations $R^\top(t)b_i(t)$ are measured along
a common set of $n\geq1$ possibly time-varying body-frame vectors,
collected in $\Lambda_i(t)=\Lambda(t)\in\R^{3\times n}$.
Assume that $S(t)$ is uniformly positive definite, i.e.,
$\lambda_1(S(t))\geq\mu_S>0$, and that $\Omega(t)$, $b_i(t)$, and $\Lambda(t)$ are uniformly continuous and bounded.
Suppose further that $\Lambda(t)$ has constant rank and that its nonzero singular values are uniformly bounded away from zero.
If $R(t)\Lambda(t)$ is persistently exciting, then
$\tilde R=I_3$ is an almost globally asymptotically stable equilibrium
of \eqref{eq:closed-loop}.
\end{theorem}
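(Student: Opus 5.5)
The plan is to first reduce the innovation \eqref{eq:innovation_brute} to a closed form in terms of $\tilde R$. Then I will show that the standard trace potential is nonincreasing, and finally use persistence of excitation, in the spirit of \cite{Trumpf2012}, to conclude convergence.

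\textbf{Step 1: closed form of $\Delta$.} With $\Lambda_i=\Lambda$, set $P:=\Lambda\Lambda^\dagger$, the (symmetric) orthogonal projector onto the range of $\Lambda$. By the Moore--Penrose identities, $(\Lambda^\top)^\dagger\Lambda^\top=(\Lambda\Lambda^\dagger)^\top=P$. Hence
\[
\hat R(\Lambda^\top)^\dagger\tilde y_i=\hat R P(\hat R^\top-R^\top)b_i=Q(I_3-\tilde R)b_i=:Mb_i,
\qquad Q:=\hat R P\hat R^\top .
\]
Since $S$ is uniformly positive definite, $S^\dagger=S^{-1}$. The identity $\sum_i u_i\times w_i=2\,\mathrm{vex}\big(\tfrac12\sum_i(w_iu_i^\top-u_iw_i^\top)\big)$ applied with $u_i=S^{-1}b_i$ and $w_i=Mb_i$ gives $\sum_i w_iu_i^\top=MSS^{-1}=M$. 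This is exactly where the $S^\dagger$ normalization is used, and it makes $\Delta$ independent of the individual $b_i$:
\[
\Delta=2k\,\mathrm{vex}\big(\mathbb P_a(M)\big)=-2k\,\mathrm{vex}\big(\mathbb P_a(Q\tilde R)\big),
\]
where $\mathbb P_a$ denotes the skew part and we used that $Q$ is symmetric. (I will double-check the sign convention of $\mathrm{vex}$ here, since it determines whether $\dot V$ is negative.)

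\textbf{Step 2: Lyapunov decrease.} Take $V=\tr(I_3-\tilde R)=2(1-\cos\tilde\theta)$. Along \eqref{eq:closed-loop}, $\dot V=-\tr([\Delta]_\times\tilde R)=2\Delta^\top\mathrm{vex}(\mathbb P_a(\tilde R))$ (up to the same sign check). Write $\tilde R=\cos\tilde\theta I+\sin\tilde\theta[u]_\times+(1-\cos\tilde\theta)uu^\top$ with $u\in\mathbb S^2$. Using $\mathrm{vex}(\mathbb P_a(Q[u]_\times))=\tfrac12(\tr(Q)I-Q)u$ and $u^\top\mathrm{vex}(\mathbb P_a(Quu^\top))=\tfrac12u^\top(u\times Qu)=0$, I expect
\[
\dot V=-2k\sin^2\tilde\theta\;u^\top\big(\tr(Q)I_3-Q\big)u\le0 ,
\]
since $Q$ is an orthogonal projector and therefore $\tr(Q)I-Q\succeq0$. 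Consequently $V$ is nonincreasing, $\tilde\theta(t)$ is nonincreasing, and $\int_0^\infty \sin^2\tilde\theta\,u^\top(\tr Q\, I-Q)u\,dt<\infty$.

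\textbf{Step 3: PE argument (main obstacle).} The time-varying weight is $Q=\tilde R\,(RP R^\top)\,\tilde R^\top$. The bounds on the singular values of $\Lambda$ and its constant rank make $P$, and hence $Q$, uniformly continuous. Boundedness of $\Omega$ and $\Delta$ makes $\tilde R$ and $u$ uniformly continuous wherever $\sin\tilde\theta$ stays away from zero. A Barbalat-type argument then yields $\sin^2\tilde\theta\,u^\top(\tr Q\, I-Q)u\to0$. The hard part is to exclude the possibility that $u$ asymptotically aligns with the directions where $\tr(Q)I-Q$ degenerates. This degeneracy occurs only when $\operatorname{rank}P=1$ and $u$ lies along the single direction $\hat R\Lambda$, i.e., along $\tilde R R\Lambda$.

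The approach is to use that $\tilde R$ varies slowly, since $\dot{\tilde R}\to0$ in integral sense. Over windows of length $\delta$, one can then replace $\tilde R$ by a constant and $u$ by an essentially constant vector. The PE condition on $R\Lambda$ (whose second eigenvalue is at least $\mu$) then forces
\[
\int_t^{t+\delta}u^\top(\tr Q\, I-Q)u\,ds\ge c(\mu,\delta)>0 ,
\]
because $R\Lambda$ cannot stay collinear with a fixed vector. This gives $\sin\tilde\theta\to0$, so $\tilde\theta\to0$ or $\tilde\theta\to\pi$.

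\textbf{Step 4: the $\tilde\theta=\pi$ set.} Finally, I will show that the invariant set $\{\tilde\theta=\pi\}$ is unstable and that its basin of attraction has measure zero. Near this set $V$ is maximal, so by the same PE estimate, perturbations with $\tilde\theta<\pi$ strictly decrease $V$ and do not return. Combining this with uniform local stability of $I_3$, which follows from $V$ being a proper Lyapunov function near $\tilde\theta=0$, gives almost-global asymptotic stability of $\tilde R=I_3$.
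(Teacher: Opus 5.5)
Your proposal is correct and follows essentially the paper's route. It uses the same reduction of the innovation to $[\Delta]_\times=k(\tilde R^\top Q-Q\tilde R)$ via $S^\dagger=S^{-1}$ and $(\Lambda^\top)^\dagger\Lambda^\top=\Lambda\Lambda^\dagger$, and the same trace potential with $\dot V=-k\tr\big(R\Lambda\Lambda^\dagger R^\top(I_3-\tilde R^2)\big)$; your axis--angle expression is the one in the paper's subsequent remark. The paper then imports the persistence-of-excitation argument from Proposition~4.6 of \cite{Trumpf2012}, after proving two facts you only assert. The first is uniform continuity of $\Lambda\Lambda^\dagger$, via $\|(\Lambda\Lambda^\dagger)(t)-(\Lambda\Lambda^\dagger)(s)\|\le \frac{2}{\underline{\sigma}}\|\Lambda(t)-\Lambda(s)\|$. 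The second is excitation of the projector, via $R\Lambda\Lambda^\top R^\top\preceq M^2R\Lambda\Lambda^\dagger R^\top$.

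Your Steps~3--4 unpack that imported argument soundly, with two clarifications. First, the set $\{\tilde\theta=\pi\}$ needs only monotonicity of $V$, since any trajectory with $\tilde\theta(0)<\pi$ keeps $\tilde\theta(t)\le\tilde\theta(0)$. Second, your slow-variation bound $\int_0^\infty|\Delta|^2dt<\infty$ holds on such sublevel sets but not near $\tilde\theta=\pi$, where $\dot V$ vanishes while $\Delta$ need not.
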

\begin{proof}
Define 
\begin{align}\label{eq: P 3b proof}
    P:=R\Lambda\Lambda^\dagger R^\top,
    \qquad
    \hat P:=\tilde R P\tilde R^\top .
\end{align}
Since $S\succeq\mu_S I_3$, one has $S^\dagger=S^{-1}$. Moreover, using
$(\Lambda^\top)^\dagger\Lambda^\top=\Lambda\Lambda^\dagger$ and
$\tilde R=\hat R R^\top$, the innovation \eqref{eq:innovation_brute} can be
written as
\[
    \Delta
    =
    k\sum_{i=1}^p
    [S^{-1}b_i]_\times
    \hat P(I_3-\tilde R)b_i .
\]
Using $[[x]_\times y]_\times=yx^\top-xy^\top$ and
$\sum_{i=1}^p b_ib_i^\top=S$, it follows that
\begin{align*}
    [\Delta]_\times
    &=
    k\Big(
    \hat P(I_3-\tilde R)
    -(I_3-\tilde R)^\top\hat P
    \Big)\\
    &=
    k(\tilde R^\top\hat P-\hat P\tilde R)
     =k(P\tilde R^\top-\tilde R P).
\end{align*}
Considering the potential function $V:=\tr(I_3-\tilde R)$, its time derivative satisfies
\begin{align}\label{eq:Vdot_P}
    \dot V
    &=-\tr([\Delta]_\times\tilde R)
      =-k\tr\!\left(P(I_3-\tilde R^2)\right).
\end{align}
Since $\Lambda\Lambda^\dagger$ is the orthogonal projector onto
$\operatorname{Im}(\Lambda)$, $P$ is the orthogonal projector onto
$\operatorname{Im}(R\Lambda)$. Let
$r:=\operatorname{rank}(\Lambda)$ and let
$\{p_1,\ldots,p_r\}$ be an orthonormal basis of
$\operatorname{Im}(R\Lambda)$. Then
\[
    P=\sum_{j=1}^r p_jp_j^\top,
\]
and \eqref{eq:Vdot_P} yields
\begin{align}
    \dot V
    &=
    -k\sum_{j=1}^r
    \left(1-p_j^\top\tilde R^2p_j\right)
    \nonumber\\
    &=
    -\frac{k}{2}
    \sum_{j=1}^r
    \left|(I_3-\tilde R^2)p_j\right|^2
    \leq0.
    \label{eq:Vdot_projector}
\end{align}

Since $\Lambda(t)$ is bounded, there exists $M>0$ such that
$\|\Lambda(t)\|\leq M$ for all $t\geq0$. Moreover, since $P$ is the
orthogonal projector onto $\operatorname{Im}(R\Lambda)$,
\[
    R\Lambda\Lambda^\top R^\top
    \preceq
    M^2 P .
\]
Hence, if $R\Lambda$ is persistently exciting, then
\[
    \lambda_2\!\left(
    \frac{1}{\delta}
    \int_t^{t+\delta}P(s)\,ds
    \right)
    \geq
    \frac{\mu}{M^2}>0 .
\]
Now, let $\underline{\sigma}>0$ be a uniform lower bound on the nonzero singular values of $\Lambda$, and set $Q:=\Lambda\Lambda^\dagger$. For any $s,t\geq0$, since $(I_3-Q(s))\Lambda(s)=0$, one has
\[
    (I_3-Q(s))Q(t)=(I_3-Q(s))(\Lambda(t)-\Lambda(s))\Lambda^\dagger(t).
\]
Together with $\|\Lambda^\dagger(t)\|\leq\underline{\sigma}^{-1}$, this yields
\[
\begin{aligned}
    \|(I_3-Q(s))Q(t)\|&\leq\frac{1}{\underline{\sigma}}
    \|\Lambda(t)-\Lambda(s)\|,\\\|(I_3-Q(t))Q(s)\|&\leq\frac{1}{\underline{\sigma}}\|\Lambda(t)-\Lambda(s)\|.
\end{aligned}
\]
Using
\[
    Q(t)-Q(s)=(I_3-Q(s))Q(t)-Q(s)(I_3-Q(t)),
\]
and the symmetry of $Q(t)$ and $Q(s)$, it follows that
\[
\|Q(t)-Q(s)\|\leq\frac{2}{\underline{\sigma}}\|\Lambda(t)-\Lambda(s)\|.
\]
Hence, uniform continuity of $\Lambda$ implies that $Q$ is uniformly continuous. Since $\Omega$ is bounded, $R$ is uniformly continuous, and therefore so is $P=RQR^\top$.

The remainder of the proof follows from the arguments of
Proposition~4.6 in \cite{Trumpf2012}, yielding almost-global
asymptotic stability of $\tilde R=I_3$.

\end{proof}
\begin{remark}
   For $p = 3$, let $b_1$, $b_2$, and $b_3$ form an orthonormal basis, and consider $\Lambda = a \in \mathbb S^2$. Then
\[
    y_i=b_i^\top Ra,\qquad i=1,2,3,
\]
and therefore
\[
    Ra=\sum_{i=1}^3 b_i y_i.
\]
Thus, the three scalar measurements are equivalent to measuring the
body-fixed direction $a$ in the inertial frame. If $Ra$ is
persistently exciting, one recovers the classical result presented
in \cite{Trumpf2012}.

When $\operatorname{rank}(\Lambda)\geq2$, persistence of excitation
is in fact not required. To see this, let
$\tilde R=\exp(\tilde\theta[u]_\times)$, with $u\in \mathbb S^2$, and set
$r:=\operatorname{rank}(\Lambda)$. Since $P$ is an orthogonal
projector of rank $r$, \eqref{eq:Vdot_P} gives
\[
    \dot V
    =
    -k(1-\cos(2\tilde\theta))(r-u^\top Pu).
\]
For $r\geq2$, $u^\top Pu\leq1$, and hence $
    \dot V
    \leq
    -2k\sin^2(\tilde\theta)$.
Therefore, $\tilde R=I_3$ is almost globally asymptotically stable
without any persistence-of-excitation requirement. This case
corresponds to inertial-frame observations of at least two independent
body-frame vectors, i.e., the frame-dual counterpart of the
classical setting involving body-frame observations of non-collinear
inertial vectors \cite{Mahony_Hamel_Pflimlin}.
\end{remark}
The following example illustrates the rank-one case of
Theorem~\ref{theorem:3b}, for which temporal excitation is required.
\begin{example}[Three-satellite GNSS Doppler and ultrasonic anemometer]
\label{example:3b}
Consider a UAV equipped with a three-dimensional ultrasonic anemometer and a GNSS receiver providing individual satellite Doppler measurements. Under negligible wind, the anemometer provides the body-frame velocity $a(t)=R^\top(t)v_I(t)$, where $v_I(t)\in\R^3$ is the UAV inertial velocity.

Suppose that Doppler measurements from three GNSS satellites, indexed by $i=1,2,3$, are available, and let $b_i(t)\in \mathbb S^2$ denote the known line-of-sight direction from the UAV to satellite $i$, obtained from the satellite ephemeris and a coarse UAV position, see Fig.~\ref{fig:gnss_configuration}. After compensating for the known satellite terms and the receiver clock drift, the resulting measurements are
\begin{align}\label{eq:output_example_3b}
    y_i=b_i^\top v_I=a^\top R^\top b_i,
\qquad i=1,2,3.
\end{align}

This configuration corresponds to $p=3$ and $\Lambda=a$. If the satellites' line-of-sight directions remain in a uniformly nondegenerate geometry, \textit{i.e.}, they do not approach a coplanar configuration, then $S(t)$ is uniformly positive definite. Since $Ra=v_I$, Theorem~\ref{theorem:3b}  guarantees almost-global asymptotic convergence provided that $v_I(t)$ is persistently exciting and bounded away from zero.
\end{example}

When only two scalar measurements are available, convergence is established on a restricted, explicitly characterized region of attraction. Two configurations are considered next: two inertial vectors measured along a common body-frame vector, and one inertial vector measured along two distinct body-frame vectors. In both cases, the size of the guaranteed region of attraction depends on the relative geometry of the inertial and body-frame sensing vectors.

\begin{lemma}[One component of two inertial vectors]\label{lemma:2b 1a}
Consider two uniformly non-collinear inertial vectors $b_1(t),b_2(t)\in\R^3$, each measured along the same possibly time-varying body-frame vector $a(t)\in\R^3$, with $|a(t)|\geq\underline{a}>0$ for all $t\geq0$. Assume that $\Omega(t)$, $b_1(t)$, $b_2(t)$, and $a(t)$ are uniformly continuous and bounded. Suppose that there exist $\theta^\star\in(0,\pi/2)$ and $\epsilon\in[0,\cos(\theta^\star/2)\cos(\theta^\star))$ such that
\begin{align}\label{condition:2b_1a_epsilon}
  \sin\!\left(\angle\!\left(a(t),R^\top(t)
\big(b_1(t)\times b_2(t)\big)\right)\right)\leq\epsilon,
\qquad \forall t\geq0.  
\end{align}
If $R(t)a(t)$ is persistently exciting, then $\tilde R=I_3$ is an asymptotically stable equilibrium of the closed-loop error dynamics, and its basin of attraction contains all initial conditions such that $\tilde{\theta}(0)\leq \theta^\star$.
\end{lemma}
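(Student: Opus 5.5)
The plan is to follow the proof of Theorem~\ref{theorem:3b}: rewrite the innovation in terms of $\tilde R$ and a projector, compute $\dot V$ for $V=\tr(I_3-\tilde R)$, and show that $\dot V\leq0$ on the sublevel set $\{\tilde\theta\leq\theta^\star\}$. That set is then forward invariant, and the persistence-of-excitation argument of Proposition~4.6 in \cite{Trumpf2012} gives convergence inside it.

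\textbf{Step 1: rewrite the innovation.} Here $p=2$ and $\Lambda=a$, so $S=b_1b_1^\top+b_2b_2^\top$ has rank two. Let $c:=(b_1\times b_2)/|b_1\times b_2|$; uniform non-collinearity keeps $c$ well defined and uniformly continuous. Then $\sum_i S^\dagger b_ib_i^\top=\pi_c$. With $P:=Raa^\top R^\top/|a|^2$ and $\hat P:=\tilde RP\tilde R^\top$, the same computation as in Theorem~\ref{theorem:3b} gives $\Delta=k\sum_i[S^\dagger b_i]_\times M b_i$ with $M:=\hat P(I_3-\tilde R)$. Using $[[x]_\times y]_\times=yx^\top-xy^\top$, this becomes
\[
[\Delta]_\times=k\big(M\pi_c-\pi_cM^\top\big).
\]

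\textbf{Step 2: split $\dot V$.} Write $\pi_c=I_3-cc^\top$. This gives
\[
\dot V=-k\tr\!\big(P(I_3-\tilde R^2)\big)+k\big(c^\top\tilde RMc-c^\top M^\top\tilde Rc\big).
\]
The first term is the favourable term $-\tfrac{k}{2}|(I_3-\tilde R^2)p|^2$ from \eqref{eq:Vdot_projector}, with $p:=Ra/|a|$. The second term is the defect caused by the missing direction $c$. Condition \eqref{condition:2b_1a_epsilon} says that $p$ is within angle $\arcsin\epsilon$ of $\pm c$. I would decompose $p=\cos\alpha\,(\pm c)+\sin\alpha\,w$ with $w\perp c$ and $|\sin\alpha|\leq\epsilon$. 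Then I would write $\tilde R=\exp(\tilde\theta[u]_\times)$ and express both terms through $\tilde\theta$, $u^\top c$, $u^\top p$ and $\alpha$ via the Rodrigues formula.

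\textbf{Step 3: the key inequality (main obstacle).} I expect this to be the hardest step: showing that the defect term is dominated by the favourable term whenever $\tilde\theta\leq\theta^\star$. The target bound is
\[
\dot V\leq -k\,\gamma\,\big|(I_3-\tilde R^2)p\big|^2
\]
for some $\gamma>0$ depending on $\theta^\star$ and $\epsilon$. The defect involves $c^\top\hat P(I_3-\tilde R)c$, whose size is controlled by $\tilde\theta$ and by the angle between $\tilde Rp$ and $c$. The threshold $\epsilon<\cos(\theta^\star/2)\cos\theta^\star$ should appear exactly when the worst case over the rotation axis $u$ is taken. The factor $\cos\theta^\star$ should come from $\tilde R^2$ and $\tilde R$ acting on $c$; for instance, $c^\top\tilde Rc\geq\cos\tilde\theta>0$ since $\theta^\star<\pi/2$. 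The factor $\cos(\theta^\star/2)$ should come from $|(I_3-\tilde R)x|=2\sin(\tilde\theta/2)|u\times x|$ compared with $|(I_3-\tilde R^2)x|$.

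I would first treat the aligned case $\epsilon=0$, where $P=cc^\top$ and the algebra is explicit. I would then treat $\epsilon>0$ as a perturbation, bounding the cross terms by $\epsilon$ times the leading ones. If the clean factorization fails, a fallback is to prove $\dot V\leq0$ only when $\tilde\theta\leq\theta^\star$, which is all that invariance needs.

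\textbf{Step 4: conclude.} Because $V=2(1-\cos\tilde\theta)$, the sublevel set $\{\tilde\theta\leq\theta^\star\}$ is forward invariant, and $V$ is nonincreasing on it. Persistence of excitation of $Ra$ gives, as in Theorem~\ref{theorem:3b}, $\lambda_2\big(\tfrac1\delta\int_t^{t+\delta}P\big)\geq\mu/M^2$, using boundedness of $a$ to get $M$. Uniform continuity of $P$ follows from $|a|\geq\underline a$ together with uniform continuity of $a$ and $R$. Integrating the bound $\dot V\leq-k\gamma|(I_3-\tilde R^2)p|^2$ and applying Barbalat's lemma gives $(I_3-\tilde R^2)p(t)\to0$. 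The excitation argument of Proposition~4.6 in \cite{Trumpf2012} then forces $\tilde R^2\to I_3$. Since $\tilde\theta\leq\theta^\star<\pi/2$, this yields $\tilde R\to I_3$. Stability follows from $V$ being nonincreasing near $\tilde R=I_3$.
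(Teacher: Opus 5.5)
\textbf{Gap in Step 3.} Your Steps 1, 2 and 4 follow the paper's architecture: the same $V$, the same rewriting of $\Delta$, and the same Barbalat/persistence-of-excitation ending. Step 3, however, is the only new content relative to Theorem~\ref{theorem:3b}, and you state it as an expectation without proving it. The setup you propose for it would also not close.

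With $\pi_c=I_3-cc^\top$, your defect term equals $k\big(c^\top(\tilde R-I_3)p\big)\big(c^\top(\tilde R^2-I_3)p\big)$, with $p=Ra/|a|$. This term is not $O(\epsilon)$: for $c=\pm p$ it equals $k(1-p^\top\tilde Rp)(1-p^\top\tilde R^2p)$. Its Cauchy--Schwarz bound, $k|(\tilde R-I_3)p|\,|(\tilde R^2-I_3)p|=k(1-p^\top\tilde R^2p)/\cos(\tilde\theta/2)$, exceeds the favourable term $k(1-p^\top\tilde R^2p)$. So magnitude estimates on this defect cannot give dominance.

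Expanding $p$ about $c$ and bounding the cross terms by $\epsilon$ times the aligned-case terms also fails. For $u\parallel c$ the aligned-case expression $-k(c^\top\tilde Rc)(1-c^\top\tilde R^2c)$ vanishes. There, however, $\dot V=-k(1-p^\top\tilde R^2p)=-k\sin^2\alpha\,(1-\cos 2\tilde\theta)\neq0$.

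Your fallback is insufficient as well. $\dot V\le0$ gives invariance, but without a bound of the form $\dot V\le-\gamma(1-p^\top\tilde R^2p)$, Barbalat's lemma produces nothing to combine with persistence of excitation.

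\textbf{The missing idea: keep the measured direction $p$ fixed and perturb the projector.} Your Step 2 expression is equivalent to $\dot V=-k\big((\tilde R-I_3)p\big)^\top\pi_c(\tilde R^2-I_3)p$. Split $\pi_c=\pi_p+E$ with $E:=pp^\top-cc^\top$, so that $\|E\|=|\sin\angle(p,c)|\le\epsilon$ by the lemma's hypothesis.

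The $\pi_p$-part is then exact: $-k(p^\top\tilde Rp)(1-p^\top\tilde R^2p)\le-k\cos\tilde\theta\,(1-p^\top\tilde R^2p)$. The $E$-part is at most $k\epsilon|(\tilde R-I_3)p|\,|(\tilde R^2-I_3)p|$. You correctly anticipated the ratio $|(\tilde R-I_3)p|=|(\tilde R^2-I_3)p|/(2\cos(\tilde\theta/2))$. Together with $|(\tilde R^2-I_3)p|^2=2(1-p^\top\tilde R^2p)$, it turns the $E$-part bound into $k\epsilon(1-p^\top\tilde R^2p)/\cos(\tilde\theta/2)$. This gives $\dot V\le-k(1-p^\top\tilde R^2p)\big(\cos\tilde\theta-\epsilon/\cos(\tilde\theta/2)\big)$, which is the bound Barbalat needs.

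The threshold $\epsilon<\cos(\theta^\star/2)\cos\theta^\star$ therefore comes from combining two separate estimates, not from an exact worst case over $u$. The two vectors $(\tilde R-I_3)p$ and $(\tilde R^2-I_3)p$ always make an angle $\tilde\theta/2$, so the Cauchy--Schwarz step is not tight.

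\textbf{Smaller point in Step 4.} The PE argument of \cite{Trumpf2012} also requires $\dot{\tilde R}\to0$. On $\tilde\theta\le\theta^\star<\pi/2$, the same ratio identity turns $(\tilde R^2-I_3)p\to0$ into $(\tilde R-I_3)p\to0$. Since $|\Delta|$ is bounded by a multiple of $|(\tilde R-I_3)p|$, this gives $\Delta\to0$ and hence $\dot{\tilde R}\to0$. Only then can you conclude $\tilde R\to I_3$.
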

The proof is reported in Appendix~\ref{proof:2b 1a}.
Lemma~\ref{lemma:2b 1a} shows that attitude estimation can be achieved using one scalar projection of each of two inertial vectors along a common body-frame vector. This relaxes the classical setting in which two complete three-dimensional vector measurements are available \cite{Trumpf2012}. We now consider the sensing configuration of Example~\ref{example:3b} with only two GNSS Doppler measurements and show that Lemma~\ref{lemma:2b 1a} still guarantees convergence.
\begin{example}[Two-satellite GNSS Doppler and ultrasonic anemometer]
\label{example:2b_1a}
Consider the UAV configuration of Example~\ref{example:3b}, using Doppler measurements from only two GNSS satellites:
\[
    y_1=a^\top R^\top b_1,
    \qquad
    y_2=a^\top R^\top b_2.
\]
Assume, without loss of generality, that $(b_1\times b_2)/|b_1\times b_2|=e_2$, and consider the yaw motion $R(t)=R_z(\psi(t))$ with $v_I(t) = |v_I|[\cos\psi(t)\ \sin\psi(t)\ 0]^\top$, where
$\psi(t)=-\pi/2+\psi_0\sin(\omega t)$, with $\psi_0\in(0,\pi/6)$, $\omega\neq0$. Then, it is not difficult to verify that
\[
\sin\!\left(\angle\!\left(
a(t), R^\top(t)(b_1\times b_2)
\right)\right)
\!=\!
\left|\sin\!\left(\psi_0\sin(\omega t)\right)\right|
\!\leq\sin(\psi_0).
\]
Moreover, $R(t)a(t) = v_I(t)$ is persistently exciting. Lemma~\ref{lemma:2b 1a} therefore applies with $\epsilon=\sin(\psi_0)$. For $\psi_0=15^\circ$, convergence is guaranteed for initial attitude-error angles strictly smaller than $71.4^\circ$.
\end{example}

\begin{lemma}[Two components of a single inertial vector]
\label{lemma:1b_2a}
Consider a possibly time-varying known inertial vector $b(t)\in\R^3$ measured along two uniformly non-collinear body-frame vectors $a_1(t),a_2(t)\in\R^3$, collected in $\Lambda(t)=[\,a_1(t)\ a_2(t)\,]$. Assume that $\Omega(t)$, $b(t)$, $a_1(t)$, and $a_2(t)$ are uniformly continuous and bounded, and that $|b(t)|\geq\underline{b}>0$ for all $t\geq0$. Suppose that there exist $\theta^\star\in(0,\pi/2)$ and $\epsilon\in[0,\cos(\theta^\star/2)\cos(\theta^\star))$ such that
\begin{align}\label{condition:1b 2a epsilon}
\sin\!\left(\angle\!\left(
a_1(t)\times a_2(t),R^\top(t)b(t)
\right)\right)\leq\epsilon,
\qquad \forall t\geq0.
\end{align}
If $b(t)$ is persistently exciting, then $\tilde R=I_3$ is an asymptotically stable equilibrium of the closed-loop error dynamics \eqref{eq:closed-loop}, and its basin of attraction contains all initial conditions such that $\tilde \theta(0)\leq \theta^\star$.
\end{lemma}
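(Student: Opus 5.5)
The plan is to follow the structure of the proof of Theorem~\ref{theorem:3b}. First we rewrite the innovation in terms of the attitude error. Then we use the trace potential $V=\tr(I_3-\tilde R)$, and we finish with a persistence-of-excitation argument in the spirit of Proposition~4.6 in \cite{Trumpf2012}.

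\textbf{Rewriting the innovation.} Here $p=1$ and $S=bb^\top$, so $S^\dagger b=b/|b|^2$. With $\bar b:=b/|b|$ and $\tilde y=\Lambda^\top\hat R^\top(I_3-\tilde R)b$, one obtains
\[
\Delta=k[\bar b]_\times\hat P(I_3-\tilde R)\bar b,\qquad \hat P:=\hat R\Lambda\Lambda^\dagger\hat R^\top .
\]
Since $a_1,a_2$ are uniformly non-collinear, $\Lambda\Lambda^\dagger=I_3-nn^\top$ with $n:=(a_1\times a_2)/|a_1\times a_2|$. Hence $\hat P=I_3-\hat n\hat n^\top$ with $\hat n:=\tilde R Rn$. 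Condition \eqref{condition:1b 2a epsilon} says precisely that $Rn=\pm\bar b+w$, where $|w|$ is of order $\epsilon$; more precisely $|\pi_{\bar b}Rn|\le\epsilon$. As in the previous proof, $[[x]_\times y]_\times=yx^\top-xy^\top$ gives
\[
[\Delta]_\times=k\big(\hat P(I_3-\tilde R)\bar b\bar b^\top-\bar b\bar b^\top(I_3-\tilde R)^\top\hat P\big),
\]
and therefore $\dot V=-\tr([\Delta]_\times\tilde R)$ splits into two parts. The $I_3$ part of $\hat P$ is the classical single-vector complementary-filter term, which is nonpositive and proportional to $|\bar b\times\tilde R\bar b|^2$-type quantities. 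The $\hat n\hat n^\top$ part is a correction.

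\textbf{Evaluating the correction.} We will write $\tilde R=\exp(\tilde\theta[u]_\times)$ and first evaluate the correction in the ideal case $Rn=\pm\bar b$, where $\hat n=\pm\tilde R\bar b$. In that case everything reduces to scalar functions of $\tilde\theta$ and $\cos\angle(u,\bar b)$. I expect $\dot V\le -k\,c(\tilde\theta)\,|(I_3-\tilde R)\bar b|^2$, with $c(\tilde\theta)$ positive on $[0,\pi/2)$. The candidate is $c(\tilde\theta)\propto\cos(\tilde\theta/2)\cos\tilde\theta$, and this is where the threshold $\cos(\theta^\star/2)\cos(\theta^\star)$ should come from. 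The general case is then treated as a perturbation: the terms linear in $w$ are bounded by $\epsilon$ times the same quadratic quantity $|(I_3-\tilde R)\bar b|^2$, possibly also $|\pi_{\bar b}\cdots|$ factors, using $|(I_3-\tilde R)\bar b|\le 2\sin(\tilde\theta/2)$-type estimates. This yields
\[
\dot V\le -k\big(\cos(\tilde\theta/2)\cos\tilde\theta-\epsilon\big)\,\gamma(\tilde R,\bar b)\le0
\quad\text{whenever } \tilde\theta\le\theta^\star,
\]
where $\gamma\ge0$ vanishes only when $\tilde R\bar b=\bar b$. Since $V=2(1-\cos\tilde\theta)$ is monotone in $\tilde\theta$, the sublevel set $\{\tilde\theta\le\theta^\star\}$ is forward invariant. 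Lyapunov stability of $I_3$ follows on it.

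\textbf{Convergence.} Integrating $\dot V$ gives $\int_0^\infty\gamma\,dt<\infty$. The signals $\Omega$, $b$, $a_i$ are bounded and uniformly continuous, and $\Lambda\Lambda^\dagger$ is uniformly continuous by the singular-value argument in the proof of Theorem~\ref{theorem:3b}. Hence $\gamma$ is uniformly continuous, and Barbalat's lemma gives $(I_3-\tilde R)\bar b\to0$. Moreover $\dot{\tilde R}\to0$ because $\Delta\to0$. Persistence of excitation of $b$ (equivalently of $\bar b$, given $|b|\ge\underline b$) then forces $\tilde R\to I_3$ inside the invariant region, exactly as in Proposition~4.6 of \cite{Trumpf2012}. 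The other equilibria are excluded because $\tilde\theta\le\theta^\star<\pi/2$.

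\textbf{Main obstacle.} The hard step is the exact computation and bound of the $\hat n\hat n^\top$ cross term. This means obtaining the sharp factor $\cos(\tilde\theta/2)\cos\tilde\theta$ and showing that the $\epsilon$-perturbation is dominated by the same quadratic form. The first thing to try is to decompose $\bar b$ along $u$ and $u^\perp$, using Rodrigues' formula for $\tilde R$ and $\tilde R^2$.
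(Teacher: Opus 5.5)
Your outline is the paper's: rewrite $\Delta$ in error coordinates, use $V=\tr(I_3-\tilde R)$, split the projector into an ideal part $\pi_{\bar b}$ plus a perturbation controlled by \eqref{condition:1b 2a epsilon}, then invariance of $\{\tilde\theta\le\theta^\star\}$, Barbalat, and Proposition~4.6 of \cite{Trumpf2012}. \textbf{The gap is the central inequality} $\dot V\le-k(\cos(\tilde\theta/2)\cos\tilde\theta-\epsilon)\gamma$. You state it as an expectation matched to the threshold in the statement, but you do not derive it, and the route you sketch would not deliver it.

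Write $Rn=c\,\bar b+w$ with $w\perp\bar b$ and $c=\pm\sqrt{1-|w|^2}$. The perturbation $E:=\bar b\bar b^\top-(Rn)(Rn)^\top$ is not only its part linear in $w$. It also contains $|w|^2\bar b\bar b^\top-ww^\top$. If you bound the linear and quadratic pieces separately by their operator norms, you get the constant $|w|(\sqrt{1-|w|^2}+|w|)>|w|$. You would then obtain the lemma only under a condition strictly stronger than $\epsilon<\cos(\theta^\star/2)\cos\theta^\star$.

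Your candidate factor for the ideal term is also off. Relative to $|(I_3-\tilde R)\bar b|^2$, the ideal coefficient is $2\cos^2(\tilde\theta/2)\,\bar b^\top\tilde R\bar b$ and the perturbation coefficient is $2\epsilon\cos(\tilde\theta/2)$. The threshold is their ratio, so it cannot be read off from the ideal case alone.

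\textbf{How to close it: keep $E$ whole.} Set $x:=(\tilde R^\top-I_3)\bar b$ and $z:=(\tilde R^{2\top}-I_3)\bar b$. Your expression for $\dot V$, together with $\hat P=\tilde R\pi_{Rn}\tilde R^\top$, gives $\dot V=-k\,x^\top\pi_{Rn}z$ with $\pi_{Rn}=\pi_{\bar b}+E$. Since $E$ is a difference of two rank-one orthogonal projectors, $\|E\|=\sin\angle(\bar b,Rn)\le\epsilon$ exactly. You then need three exact facts.

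First, $x^\top\pi_{\bar b}z=(\bar b^\top\tilde R\bar b)(1-\bar b^\top\tilde R^2\bar b)$. This follows from $x^\top\bar b=\bar b^\top\tilde R\bar b-1$, $z^\top\bar b=\bar b^\top\tilde R^2\bar b-1$ and $x^\top z=1-\bar b^\top\tilde R^2\bar b$.

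Second, $\bar b^\top\tilde R\bar b=\cos\tilde\theta+(1-\cos\tilde\theta)(u^\top\bar b)^2\ge\cos\tilde\theta$.

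Third, Rodrigues' formula gives the exact ratio $|z|=2\cos(\tilde\theta/2)|x|$, i.e.\ $1-\bar b^\top\tilde R^2\bar b=4\cos^2(\tilde\theta/2)(1-\bar b^\top\tilde R\bar b)$. This ratio, not an upper bound like $|(I_3-\tilde R)\bar b|\le2\sin(\tilde\theta/2)$, is what matters. With it, Cauchy--Schwarz gives $|x^\top Ez|\le\epsilon|x||z|=\epsilon(1-\bar b^\top\tilde R^2\bar b)/\cos(\tilde\theta/2)$.

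Combining the three facts,
\[
\dot V\le-k\left(1-\bar b^\top\tilde R^2\bar b\right)\left(\cos\tilde\theta-\frac{\epsilon}{\cos(\tilde\theta/2)}\right).
\]
This is your conjectured bound with $\gamma=(1-\bar b^\top\tilde R^2\bar b)/\cos(\tilde\theta/2)$. On $\{\tilde\theta\le\theta^\star\}$ the second factor is at least $\cos\theta^\star-\epsilon/\cos(\theta^\star/2)>0$.

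From there your invariance, Barbalat and persistence-of-excitation steps go through as written, with one small fix. Passing from persistence of excitation of $b$ to that of $\bar b$ uses the \emph{upper} bound on $|b|$, via $\bar b\bar b^\top\succeq bb^\top/\sup_t|b(t)|^2$, not the lower bound $\underline b$.
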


The proof is reported in Appendix~\ref{proof: 1b 2a}. Lemma~\ref{lemma:1b_2a} shows that attitude estimation can be achieved
using only two scalar projections of the body-frame representation of
a single time-varying inertial vector. This relaxes the classical
single-vector setting in which the complete three-dimensional vector
measurement is available \cite{Trumpf2012}. We now provide an
illustrative application.
\begin{example}[Two-beam Doppler radar with GNSS velocity]
\label{example:1b_2a}
Consider a UAV equipped with two Doppler radar beams directed obliquely toward approximately stationary flat terrain. The two scalar measurements are collected as
\[
    y_1=
    \begin{bmatrix}a_1 & a_2\end{bmatrix}^{\!\top}
    R^\top v_I(t),
\]
where $a_1,a_2\in \mathbb S^2$ are the beam directions and $v_I(t)\in\R^3$ is the inertial velocity provided by GNSS.

Let $e_1$, $e_2$, and $e_3$ denote the forward, right, and downward body axes, respectively. Assume that the beams are symmetrically mounted about a direction tilted by $\gamma\in(0,\pi/2)$ in the $(e_1,e_3)$-plane:
\begin{align*}
a_1&=\cos(\phi)\big(\cos(\gamma)e_1+\sin(\gamma)e_3\big)
       +\sin(\phi)e_2,\\
a_2&=\cos(\phi)\big(\cos(\gamma)e_1+\sin(\gamma)e_3\big)
       -\sin(\phi)e_2,
\end{align*}
where $\phi\in(0,\pi/2)$ is the lateral spread of the beams; see Fig.~\ref{fig:doppler_configuration}. Let $\alpha,\beta\in(-\pi/2,\pi/2)$ denote the angle of attack and sideslip angle, respectively. Then
\[
R^\top v_I
=
|v_I|
\begin{bmatrix}
\cos(\alpha)\cos(\beta)\\
\sin(\beta)\\
\sin(\alpha)\cos(\beta)
\end{bmatrix},
\]
and therefore
\[
\left|
\frac{(a_1\times a_2)^\top R^\top v_I}
{|a_1\times a_2|\,|v_I|}
\right|
=
|\cos(\beta)\sin(\gamma-\alpha)|.
\]
If
\[
|\alpha(t)|\leq\bar{\alpha}
<\gamma<\frac{\pi}{2}-\bar{\alpha},
\qquad
|\beta(t)|\leq\bar{\beta}<\frac{\pi}{2},
\]
it follows that
\begin{align}\label{bound:pitot}
\sin\!\left(
\angle\!\left(
a_1\times a_2,R^\top v_I
\right)\right)
\leq
\sqrt{
1-\cos^2(\bar{\beta})
\sin^2(\gamma-\bar{\alpha})
}.
\end{align}
Consequently, the right-hand side of \eqref{bound:pitot} provides a
valid choice of $\epsilon$ in Lemma~\ref{lemma:1b_2a} and determines
the corresponding admissible range of $\theta^\star$.
If $v_I(t)$ is persistently exciting and bounded away from zero,
Lemma~\ref{lemma:1b_2a} guarantees convergence for initial
attitude-error angles not exceeding any admissible
$\theta^\star$.\end{example}

\begin{figure}[t]
    \captionsetup{font=normalsize}
    \centering

    \begin{minipage}[c]{0.49\columnwidth}
        \centering

        \resizebox{\linewidth}{!}{%
        \begin{tikzpicture}[
            beamdir/.style={
                -{Stealth[length=6.2mm,width=4.0mm]},
                blue,
                line width=2.8pt
            },
            beamlabel/.style={
                font=\Huge,
                scale=1.45,
                text=blue,
                inner sep=1pt
            }
        ]

        \node[inner sep=0pt] (drone) at (0,0)
        {
            \includegraphics[
                width=5.2cm,
                trim=30bp 272bp 80bp 169bp,
                clip
            ]{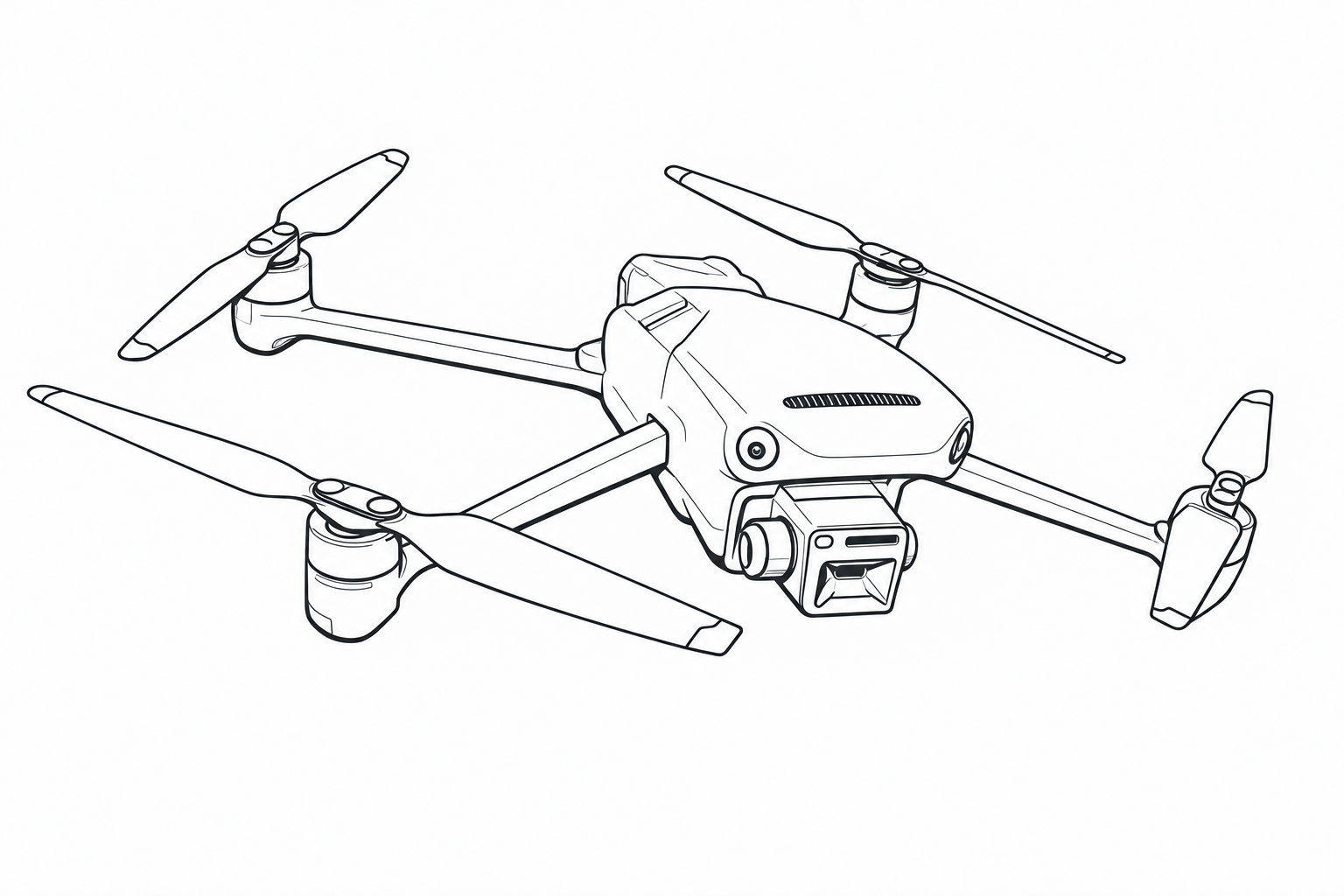}
        };

        \node[
            inner sep=0pt,
            rotate=-120
        ] (sat1) at (-4.0,4.7)
        {
            \includegraphics[
                width=4.0cm,
                trim=13bp 103bp 31bp 156bp,
                clip
            ]{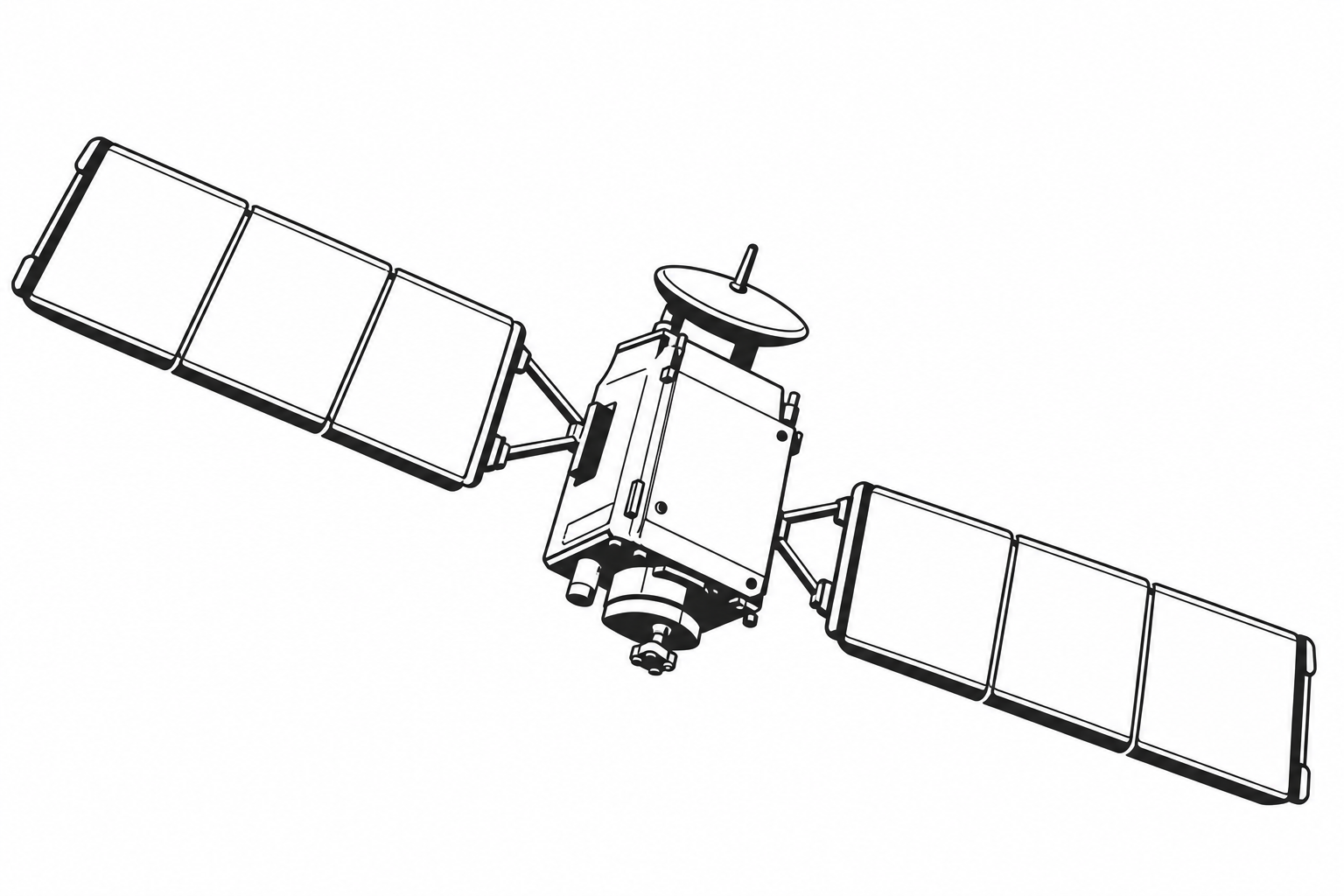}
        };

        \node[
            inner sep=0pt,
            rotate=200
        ] (sat2) at (0,5.5)
        {
            \includegraphics[
                width=4.0cm,
                trim=13bp 103bp 31bp 156bp,
                clip
            ]{images/Satellite.png}
        };

        \node[
            inner sep=0pt,
            rotate=140
        ] (sat3) at (4.0,4.7)
        {
            \includegraphics[
                width=4.0cm,
                trim=13bp 103bp 31bp 156bp,
                clip
            ]{images/Satellite.png}
        };

        \coordinate (receiver)
            at ($(drone.center)+(0,0.45)$);

        \coordinate (waveStartL)
            at ($(sat1.north)+(-0.45,+0.05)$);

        \coordinate (waveEndL)
            at ($(receiver)+(0,0.35)$);

        \coordinate (dirEndL)
            at ($(waveStartL)!0.30!(waveEndL)$);

        \draw[beamdir]
            (waveStartL) -- (dirEndL);

        \node[
            beamlabel
        ] at ($(waveStartL)!0.58!(dirEndL)+(-0.65,-0.20)$)
        {
            $b_1$
        };

        \path let
            \p1=(waveStartL),
            \p2=(waveEndL),
            \n1={atan2(\y2-\y1,\x2-\x1)+90}
        in
            \pgfextra{\xdef\rotLeft{\n1}};

        \foreach \pos/\rad in {
            0.05/0.11,
            0.14/0.14,
            0.23/0.17,
            0.32/0.20,
            0.41/0.23,
            0.50/0.26,
            0.59/0.29,
            0.68/0.32,
            0.77/0.35,
            0.86/0.38
        }
        {
            \coordinate (wL)
                at ($(waveStartL)!\pos!(waveEndL)$);

            \begin{scope}[
                shift={(wL)},
                rotate=\rotLeft
            ]
                \draw[line width=0.8pt]
                    ({\rad*cos(200)},{\rad*sin(200)})
                    arc[
                        start angle=200,
                        end angle=340,
                        radius=\rad
                    ];
            \end{scope}
        }

        \coordinate (waveStartC)
            at ($(sat2.north)+(-0.35,+0.25)$);

        \coordinate (waveEndC)
            at ($(receiver)+(0,0.35)$);

        \coordinate (dirEndC)
            at ($(waveStartC)!0.30!(waveEndC)$);

        \draw[beamdir]
            (waveStartC) -- (dirEndC);

        \node[
            beamlabel
        ] at ($(waveStartC)!0.58!(dirEndC)+(0.80,-0.02)$)
        {
            $b_2$
        };

        \path let
            \p1=(waveStartC),
            \p2=(waveEndC),
            \n1={atan2(\y2-\y1,\x2-\x1)+90}
        in
            \pgfextra{\xdef\rotCenter{\n1}};

        \foreach \pos/\rad in {
            0.05/0.11,
            0.14/0.14,
            0.23/0.17,
            0.32/0.20,
            0.41/0.23,
            0.50/0.26,
            0.59/0.29,
            0.68/0.32,
            0.77/0.35,
            0.86/0.38
        }
        {
            \coordinate (wC)
                at ($(waveStartC)!\pos!(waveEndC)$);

            \begin{scope}[
                shift={(wC)},
                rotate=\rotCenter
            ]
                \draw[line width=0.8pt]
                    ({\rad*cos(200)},{\rad*sin(200)})
                    arc[
                        start angle=200,
                        end angle=340,
                        radius=\rad
                    ];
            \end{scope}
        }

        \coordinate (waveStartR)
            at ($(sat3.north)+(0.0,+0.4)$);

        \coordinate (waveEndR)
            at ($(receiver)+(0,0.35)$);

        \coordinate (dirEndR)
            at ($(waveStartR)!0.30!(waveEndR)$);

        \draw[beamdir]
            (waveStartR) -- (dirEndR);

        \node[
            beamlabel
        ] at ($(waveStartR)!0.58!(dirEndR)+(0.80,-0.23)$)
        {
            $b_3$
        };

        \path let
            \p1=(waveStartR),
            \p2=(waveEndR),
            \n1={atan2(\y2-\y1,\x2-\x1)+90}
        in
            \pgfextra{\xdef\rotRight{\n1}};

        \foreach \pos/\rad in {
            0.05/0.11,
            0.14/0.14,
            0.23/0.17,
            0.32/0.20,
            0.41/0.23,
            0.50/0.26,
            0.59/0.29,
            0.68/0.32,
            0.77/0.35,
            0.86/0.38
        }
        {
            \coordinate (wR)
                at ($(waveStartR)!\pos!(waveEndR)$);

            \begin{scope}[
                shift={(wR)},
                rotate=\rotRight
            ]
                \draw[line width=0.8pt]
                    ({\rad*cos(200)},{\rad*sin(200)})
                    arc[
                        start angle=200,
                        end angle=340,
                        radius=\rad
                    ];
            \end{scope}
        }

        \end{tikzpicture}%
        }

        \vspace{0.5mm}
        \normalsize (a)

    \end{minipage}%
    \hfill%
    \begin{minipage}[c]{0.49\columnwidth}
        \centering

        \resizebox{\linewidth}{!}{%
        \begin{tikzpicture}[
            >=Stealth,
            axis/.style={
                -{Stealth[length=1.8mm,width=1.2mm]},
                line width=0.75pt
            },
            beamaxis/.style={
                -{Stealth[length=3.2mm,width=2.1mm]},
                line width=1.2pt,
                red
            },
            e1axis/.style={
                -{Stealth[length=3.2mm,width=2.1mm]},
                line width=1.2pt
            },
            wave/.style={
                line width=0.6pt
            },
            ground/.style={
                line width=0.8pt
            },
            framelabel/.style={
                font=\fontsize{18}{16}\selectfont,
                inner sep=1pt
            },
            anglabel/.style={
                font=\fontsize{18}{16}\selectfont,
                inner sep=1pt
            },
            beamtext/.style={
                font=\fontsize{18}{16}\selectfont,
                text=red,
                inner sep=1pt
            }
        ]

        \def\ygnd{-2.20}
        \def\xright{3.20}

        \begin{scope}[shift={(-1.35,0.35)}]

        \draw[ground]
            (-1.8,\ygnd) -- (\xright,\ygnd);

        \node[
            inner sep=0pt
        ] (img) at (0,0)
        {
            \includegraphics[
                width=6cm
            ]{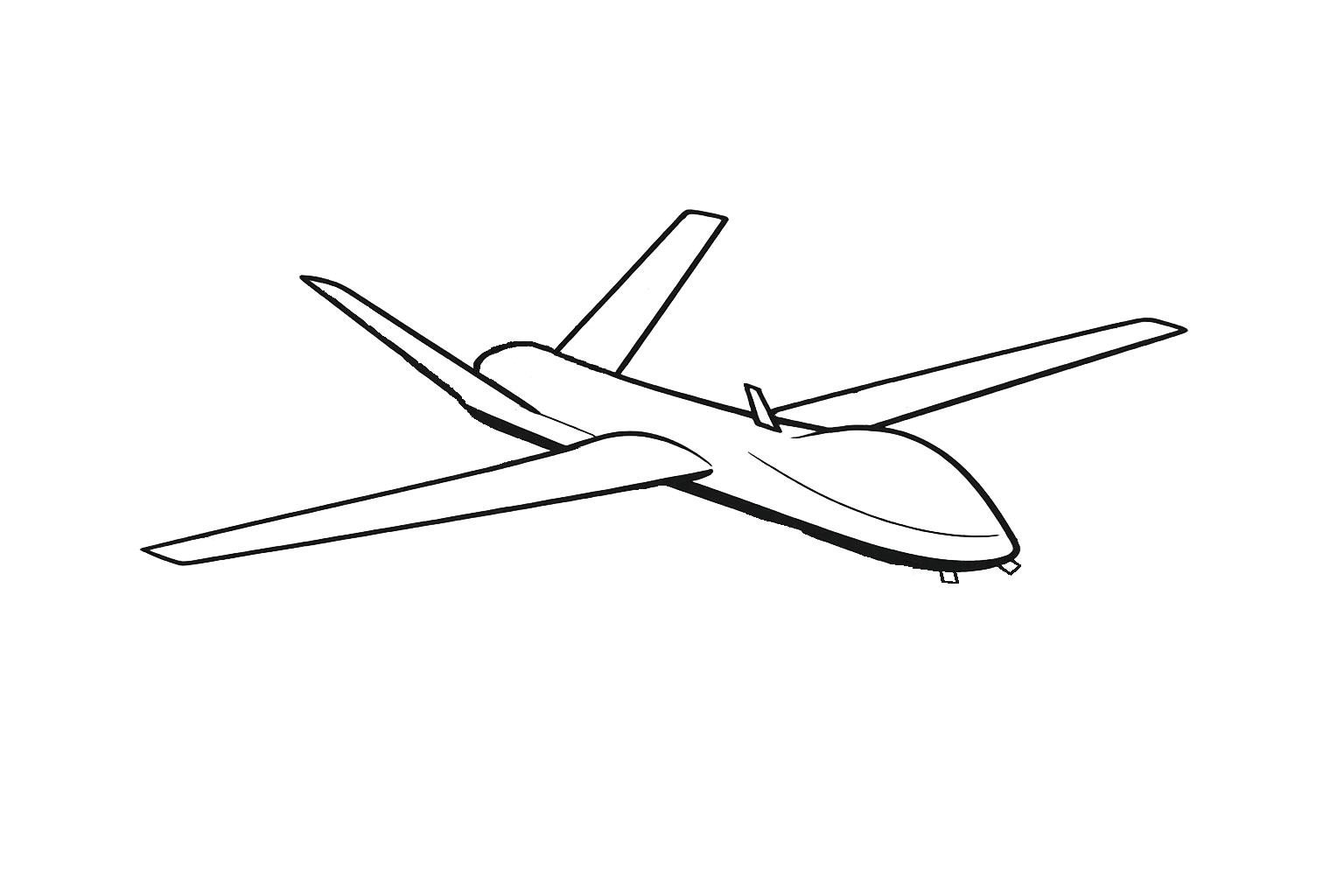}
        };

        \node[
            framelabel,
            anchor=north
        ] at ($(img.south)+(1.9,1.95)$)
        {
            $e_1$
        };

        \coordinate (O) at (1.21,-0.30);

        \coordinate (Bright) at (2.15,-0.650);
        \coordinate (Bdash)  at (1.86,-1.20);

        \coordinate (S1) at (1.24,-0.60);
        \coordinate (S2) at (1.53,-0.54);

        \coordinate (aone) at (1.26,-1.40);
        \coordinate (atwo) at (2.21,-1.04);

        \draw[e1axis]
            (O) -- (Bright);

        \draw[
            dashed,
            line width=0.6pt
        ]
            (O) -- (Bdash);

        \draw[
            line width=1.2pt
        ]
            (1.26,-1.14)
            arc[
                start angle=-95,
                end angle=-40,
                radius=0.50cm
            ];

        \node[
            anglabel
        ] at (1.55,-1.39)
        {
            $\phi$
        };

        \coordinate (E1) at (1.30,\ygnd);
        \coordinate (E2) at (\xright,-1.70);

        \path let
            \p1=(S1),
            \p2=(E1),
            \n1={atan2(\y2-\y1,\x2-\x1)+90}
        in
            \pgfextra{\xdef\rotBeamOne{\n1}};

        \coordinate (waveStart1)
            at ($(S1)!0.12!(E1)$);

        \coordinate (waveEnd1)
            at ($(S1)!0.92!(E1)$);

        \foreach \t/\r in {
            0.00/0.06,
            0.18/0.09,
            0.38/0.12,
            0.60/0.15,
            0.82/0.18
        }
        {
            \coordinate (w1)
                at ($(waveStart1)!\t!(waveEnd1)$);

            \begin{scope}[
                shift={(w1)},
                rotate=\rotBeamOne
            ]
                \draw[wave]
                    ({\r*cos(200)},{\r*sin(200)})
                    arc[
                        start angle=200,
                        end angle=340,
                        radius=\r
                    ];
            \end{scope}
        }

        \path let
            \p1=(S2),
            \p2=(E2),
            \n1={atan2(\y2-\y1,\x2-\x1)+90}
        in
            \pgfextra{\xdef\rotBeamTwo{\n1}};

        \coordinate (waveStart2)
            at ($(S2)!0.12!(E2)$);

        \coordinate (waveEnd2)
            at ($(S2)!0.92!(E2)$);

        \foreach \t/\r in {
            0.00/0.06,
            0.18/0.09,
            0.38/0.12,
            0.60/0.15,
            0.82/0.18
        }
        {
            \coordinate (w2)
                at ($(waveStart2)!\t!(waveEnd2)$);

            \begin{scope}[
                shift={(w2)},
                rotate=\rotBeamTwo
            ]
                \draw[wave]
                    ({\r*cos(200)},{\r*sin(200)})
                    arc[
                        start angle=200,
                        end angle=340,
                        radius=\r
                    ];
            \end{scope}
        }

        \draw[beamaxis]
            (S1) -- (aone);

        \draw[beamaxis]
            (S2) -- (atwo);

        \node[
            beamtext
        ] at ($(S1)!0.55!(aone)+(-0.38,-0.11)$)
        {
            $a_1$
        };

        \node[
            beamtext
        ] at ($(S2)!0.58!(atwo)+(0.52,0.12)$)
        {
            $a_2$
        };

        \draw[
            line width=1.2pt
        ]
            (2.05,-0.67)
            arc[
                start angle=-10,
                end angle=-40.5,
                radius=1.0cm
            ];

        \node[
            anglabel
        ] at (2.04,-1.25)
        {
            $\gamma$
        };

        \end{scope}

        \end{tikzpicture}%
        }

        \vspace{0.5mm}
        \normalsize (b)

    \end{minipage}

    \caption{Illustrative sensing configurations:
(a) a UAV receiving Doppler measurements from three GNSS satellites;
(b) a UAV equipped with two Doppler radar beams symmetrically arranged about a direction tilted by $\gamma$, with an angular separation of $2\phi$.}
    \label{fig:sensing_configurations}

    \makeatletter
    \begingroup
        \def\@currentlabel{\thefigure(a)}
        \label{fig:gnss_configuration}
    \endgroup
    \begingroup
        \def\@currentlabel{\thefigure(b)}
        \label{fig:doppler_configuration}
    \endgroup
    \begingroup
        \def\@currentlabel{\thefigure(b)}
        \label{fig:example_1b_2a}
    \endgroup
    \makeatother

\end{figure}

\begin{remark}
    In Lemma~\ref{lemma:2b 1a} and Lemma~\ref{lemma:1b_2a}, the size of the guaranteed region of attraction depends on the relative orientation
between the subspace spanned by the body-frame representations of the
inertial vectors and the subspace spanned by the body-frame sensing
vectors, as characterized by
\eqref{condition:2b_1a_epsilon} and
\eqref{condition:1b 2a epsilon}.

In particular, when these subspaces are orthogonal, one has
$\epsilon=0$, and any $\theta^\star<\pi/2$ satisfies the corresponding
condition. Hence, the guaranteed region of attraction can include any
initial attitude-error angle strictly smaller than $\pi/2$.
This occurs when $a$ is collinear with
$R^\top(b_1\times b_2)$ in Lemma~\ref{lemma:2b 1a}, and when
$R^\top b$ is collinear with $a_1\times a_2$ in
Lemma~\ref{lemma:1b_2a}.
\end{remark}

\section{Simulation Results}\label{section:simulation results}

This section presents simulation results illustrating the convergence
properties of the observer developed in
Section~\ref{section:observer design}. Two scenarios are considered.
The first corresponds to Examples~\ref{example:3b}
and~\ref{example:2b_1a}, using respectively three and two scalar GNSS
Doppler measurements, while the second corresponds to
Example~\ref{example:1b_2a}. For reference, the results are compared
with those obtained using the complementary filter of
\cite{Mahony_Hamel_Pflimlin} with full-vector measurements.

The first simulation considers the configurations of
Examples~\ref{example:3b} and~\ref{example:2b_1a}. The satellite--UAV
line-of-sight directions are chosen as
$b_1=
\begin{bmatrix}
\cos(30^\circ)&0&-\sin(30^\circ)
\end{bmatrix}^\top$,
$b_2=
\begin{bmatrix}
\cos(70^\circ)&0&-\sin(70^\circ)
\end{bmatrix}^\top,$ 
and $
b_3=
\begin{bmatrix}
0&\cos(45^\circ)&-\sin(45^\circ)
\end{bmatrix}^\top.$
Thus,
\[
\frac{b_1\times b_2}{|b_1\times b_2|}=e_2,
\]
while $b_1,b_2,b_3$ are linearly independent.

The UAV follows the trajectory of Example~\ref{example:2b_1a}, with
$\psi_0=15^\circ$, $\omega=1~\mathrm{rad/s}$, and
$|v_I|=15~\mathrm{m/s}$. The condition of
Lemma~\ref{lemma:2b 1a} then permits any
$\theta^\star<71.4^\circ$.

To illustrate the dependence of the scalar observers on temporal
excitation, the trajectory is held constant over
$t\in[5,35]~\mathrm{s}$ and subsequently resumed continuously.
During this interval, both the attitude and inertial velocity are
constant.

For comparison, two full-vector complementary filters are also
considered. They are assumed to have access, respectively, to the
full body-frame measurements
$\{R^\top b_1,R^\top b_2\}$ and
$\{R^\top b_1,R^\top b_2,R^\top b_3\}$.
All observers are initialized with
\[
\hat R(0)
=
R_z(-30^\circ)R_y(-45^\circ)R_x(-22.5^\circ),
\]
which yields
$\tilde\theta(0)\approx69.4^\circ$.
The innovation gain is set to $k=1$ for all observers.

The second simulation corresponds to Example~\ref{example:1b_2a}.
The Doppler-beam geometry is defined by
$\gamma=45^\circ$ and $\phi=15^\circ$, while
\[
\alpha(t)=\bar{\alpha}\sin(\omega_\alpha t),
\qquad
\beta(t)=\bar{\beta}\sin(\omega_\beta t),
\]
with
\[
\bar \alpha=20^\circ,\quad
\bar\beta=25^\circ,\quad
\omega_\alpha=0.17~\mathrm{rad/s},\quad
\omega_\beta=0.23~\mathrm{rad/s}.
\]
The condition of Lemma~\ref{lemma:1b_2a} then permits any
$\theta^\star<20.23^\circ$.

The aircraft follows the loitering trajectory
\[
v_I(t)
=
|v_I|
\begin{bmatrix}
\cos(\omega t)&\sin(\omega t)&0
\end{bmatrix}^\top,
\]
with $|v_I|=1~\mathrm{m/s}$ and
$\omega=0.35~\mathrm{rad/s}$, which ensures persistence of excitation.
The corresponding attitude is
\[
R(t)
=
R_z\big(\omega t-\beta(t)\big)
R_y\big(\alpha(t)\big).
\]

For comparison, the full-vector complementary filter is assumed to
have access to the complete body-frame measurement $R^\top v_I$.
Both observers are initialized with
\[
\hat R(0)
=
R_z(15^\circ)R_y(10^\circ)R_x(7.5^\circ),
\]
yielding $\tilde\theta(0)\approx19^\circ$.
The innovation gain is set to $k=1$ for both observers.

\subsection{Results and discussion}

The attitude-error trajectories for the two simulations are reported
in Fig.~\ref{fig:attitude error 3b} and
Fig.~\ref{fig:attitude error 1b-2a}.

In Fig.~\ref{fig:attitude error 3b}, both scalar observers converge
while the trajectory is time-varying. Their convergence essentially
stalls over $t\in[5,35]~\mathrm{s}$, during which the attitude and
inertial velocity are held constant, and resumes once temporal
excitation is restored. In contrast, the full-vector complementary
filters continue to converge because the available non-collinear
vector measurements provide instantaneous attitude information.
These results illustrate the role of temporal excitation in the
scalar-measurement configurations of Theorem~\ref{theorem:3b} and
Lemma~\ref{lemma:2b 1a}.

Figure~\ref{fig:attitude error 1b-2a} shows convergence of the
two-scalar observer of Lemma~\ref{lemma:1b_2a} from an initial error
inside the guaranteed region of attraction. The full-vector
complementary filter also converges, with access to the complete
body-frame vector $R^\top v_I$.

Since the scalar and full-vector observers exploit different
measurement information and geometry-dependent normalizations, equal
values of the gain $k$ do not imply matched convergence rates. The
comparisons are therefore intended to illustrate the convergence
properties and sensing requirements of the proposed observer rather
than to provide a quantitative comparison of convergence speed.

\begin{figure}[t]
    \centering   
    \vspace{1mm}\includegraphics[width=1.\linewidth]{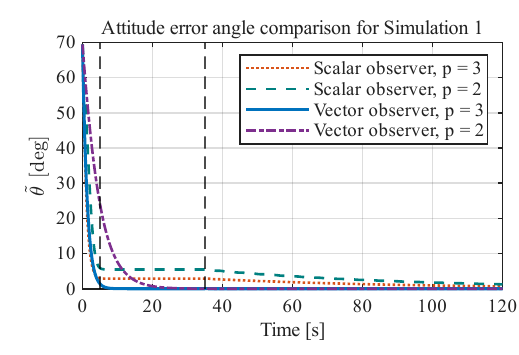}
    \caption{Attitude error $\tilde\theta$ in Simulation 1 for the proposed observer in the $3b$-$1a$ and $2b$-$1a$ configurations and the corresponding full-vector complementary filters.}
    \label{fig:attitude error 3b}
\end{figure}
\begin{figure}[t]
    \centering  
    \vspace{1mm}\includegraphics[width=1.\linewidth]{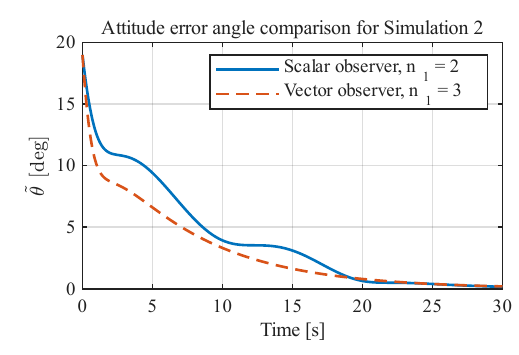}
    \caption{Attitude error $\tilde\theta$ in Simulation 2 for the proposed observer in the $1b$-$2a$ configuration and the corresponding full-vector complementary filter.}
    \label{fig:attitude error 1b-2a}
\end{figure}

\section{Conclusion}\label{section:conclusion}
This paper proposes a complementary filter on $\SO(3)$ for attitude
estimation from scalar measurements, using a constant-gain innovation
that accounts for the geometry of the sensing directions.

Three main results are established. First, almost-global asymptotic
stability is obtained when at least three known inertial vectors are
measured along a common set of body-frame vectors under a suitable
persistence-of-excitation condition, with three scalar measurements as
the minimal configuration (Theorem~\ref{theorem:3b}). Second, for two
inertial vectors measured along a common body-frame vector,
sufficient conditions for asymptotic convergence and an explicit
region of attraction are derived (Lemma~\ref{lemma:2b 1a}). Third, a
dual result is obtained for one inertial vector measured along two
independent body-frame vectors (Lemma~\ref{lemma:1b_2a}).

These results show that attitude estimation can be achieved directly
from incomplete scalar measurements by exploiting temporal excitation
and sensing geometry.
\section{Appendix}

\subsection{Proof of Lemma~\ref{lemma:2b 1a}}
\label{proof:2b 1a}
Define
\[
    \mathring a:=\frac{a}{|a|},
    \qquad
    \nu_b:=\frac{b_1\times b_2}{|b_1\times b_2|},
\]
and let
\[
    q:=R\mathring a,
    \qquad
    \hat q:=\hat R\mathring a=\tilde Rq .
\]
Since $b_1$ and $b_2$ are uniformly non-collinear,
$S=b_1b_1^\top+b_2b_2^\top$ has rank two and
\[
    S^\dagger S=\pi_{\nu_b}:=I_3-\nu_b\nu_b^\top .
\]
Moreover,
\[
    (a^\top)^\dagger a^\top
    =
    \frac{aa^\top}{|a|^2}
    =
    \mathring a\mathring a^\top .
\]
Using
\[
    \tilde y_i
    =
    a^\top\hat R^\top(I_3-\tilde R)b_i,
\]
the innovation term \eqref{eq:innovation_brute} can therefore be
rewritten as
\begin{align*}
    \Delta
    &=
    k\sum_{i=1}^2
    [S^\dagger b_i]_\times
    \hat R\mathring a\mathring a^\top
    \hat R^\top(I_3-\tilde R)b_i\\
    &=
    k\sum_{i=1}^2
    [S^\dagger b_i]_\times
    \hat q\, b_i^\top(\hat q-q)\\
    &=
    -k[\hat q]_\times
    S^\dagger\sum_{i=1}^2b_ib_i^\top(\hat q-q)\\
    &=
    k[\pi_{\nu_b}(\hat q-q)]_\times\hat q .
\end{align*}

Consider the Lyapunov function
\[
    V:=\tr(I_3-\tilde R).
\]
Using the identity
\[
    [[x]_\times y]_\times=yx^\top-xy^\top ,
\]
one obtains
\begin{align}
    \dot V
    &=
    -\tr([\Delta]_\times\tilde R)\nonumber\\
    &=
    -k(\hat q-q)^\top
    \pi_{\nu_b}
    (\tilde R^2q-q).
    \label{eq:Vdot_2b1a_initial}
\end{align}

Define
\[
    E:=qq^\top-\nu_b\nu_b^\top .
\]
Since $q$ and $\nu_b$ are unit vectors,
\begin{align}
    \|E\|
    &=
    \left|\sin\!\left(\angle(q,\nu_b)\right)\right|\nonumber\\
    &=
    \left|\sin\!\left(
    \angle(\mathring a,R^\top\nu_b)
    \right)\right|
    \leq\epsilon,
    \label{eq:E_bound_2b1a}
\end{align}
where the last inequality follows from
\eqref{condition:2b_1a_epsilon}.
Moreover,
\[
    \pi_{\nu_b}=\pi_q+E,
    \qquad
    \pi_q:=I_3-qq^\top .
\]
Hence,
\[
    \dot V=V_0+V_E,
\]
where
\begin{subequations}\label{eq:V0_VE_Lemma_1}
\begin{align}
    V_0
    &:=
    -k(\hat q-q)^\top
    \pi_q(\tilde R^2q-q),\\
    V_E
    &:=
    -k(\hat q-q)^\top
    E(\tilde R^2q-q).
\end{align}
\end{subequations}

Let
\[
    \tilde R
    =
    I_3+\sin(\tilde\theta)[u]_\times
    +(1-\cos(\tilde\theta))[u]_\times^2,
\]
where $u\in\mathbb S^2$ is the rotation axis of $\tilde R$.
Straightforward computations give
\begin{align}
    V_0
    &=
    -k(q^\top\hat q)
    \left(1-q^\top\tilde R^2q\right).
    \label{eq:V0_exact}
\end{align}
Furthermore,
\[
    q^\top\hat q
    =
    q^\top\tilde Rq
    =
    \cos(\tilde\theta)
    +(1-\cos(\tilde\theta))(u^\top q)^2
    \geq\cos(\tilde\theta).
\]
Therefore, as long as $\tilde\theta<\pi/2$,
\begin{align}
    V_0
    \leq
    -k\cos(\tilde\theta)
    \left(1-q^\top\tilde R^2q\right).
    \label{eq:V0_bound}
\end{align}

For the perturbation term, using \eqref{eq:E_bound_2b1a},
\begin{align*}
    |V_E|
    &\leq
    k\epsilon
    |\hat q-q|\,
    |\tilde R^2q-q|.
\end{align*}
The following identities hold:
\begin{align}
    |\tilde R^2q-q|^2
    &=
    2\left(1-q^\top\tilde R^2q\right),
    \label{eq:norm_R2}\\
    |\hat q-q|^2
    &=
    2(1-q^\top\tilde Rq)\nonumber\\
    &=
    2(1-\cos\tilde\theta)
    \left(1-(u^\top q)^2\right)\nonumber\\
    &=
    \frac{
    1-q^\top\tilde R^2q
    }{
    2\cos^2(\tilde\theta/2)
    }.
    \label{eq:norm_R1}
\end{align}
Consequently,
\begin{align}
    |V_E|
    \leq
    k\frac{\epsilon}{\cos(\tilde\theta/2)}
    \left(1-q^\top\tilde R^2q\right).
    \label{eq:VE_bound}
\end{align}

Combining \eqref{eq:V0_bound} and \eqref{eq:VE_bound} yields
\begin{align}
    \dot V
    \leq
    -k
    \left(1-q^\top\tilde R^2q\right)
    \left(
    \cos(\tilde\theta)
    -
    \frac{\epsilon}{\cos(\tilde\theta/2)}
    \right).
    \label{eq:Vdot_final_2b1a}
\end{align}

Now consider an initial condition satisfying
\[
    \tr(\tilde R(0))
    \geq
    1+2\cos(\theta^\star),
\]
which is equivalent to
$\tilde\theta(0)\leq\theta^\star$.
For
$\tilde\theta\in[0,\theta^\star]$, one has
\[
    \cos(\tilde\theta)
    -
    \frac{\epsilon}{\cos(\tilde\theta/2)}
    \geq
    \cos(\theta^\star)
    -
    \frac{\epsilon}{\cos(\theta^\star/2)}
    =:c^\star.
\]
By assumption,
\[
    \epsilon
    <
    \cos(\theta^\star/2)\cos(\theta^\star),
\]
and hence $c^\star>0$. Therefore,
\begin{align}
    \dot V
    \leq
    -kc^\star
    \left(1-q^\top\tilde R^2q\right)
    \leq0.
    \label{eq:Vdot_negative_2b1a}
\end{align}

Since
\[
    V=2(1-\cos\tilde\theta),
\]
and $V$ is nonincreasing, the set
\[
    \{\tilde R\in\SO(3):
    \tilde\theta\leq\theta^\star\}
\]
is forward invariant. Moreover, integrating
\eqref{eq:Vdot_negative_2b1a} gives
\[
    1-q^\top\tilde R^2q\in L_1.
\]
The boundedness and uniform continuity assumptions imply that
$1-q^\top\tilde R^2q$ is uniformly continuous. Hence, by Barbalat's
lemma,
\[
    1-q^\top\tilde R^2q\longrightarrow0.
\]
Using \eqref{eq:norm_R1} and
$\tilde\theta\leq\theta^\star<\pi/2$, it follows that
\[
    |\hat q-q|\longrightarrow0,
\]
that is,
\[
    |(\tilde R-I_3)R\mathring a|\longrightarrow0.
\]
The expression of $\Delta$ then also gives
$\Delta\to0$, and consequently
$\dot{\tilde R}\to0$.

Finally, since $a$ is bounded, there exists $\bar a>0$ such that
$|a(t)|\leq \bar a$. Since
\[
    Ra=|a|R\mathring a,
\]
one has
\[
    (R\mathring a)(R\mathring a)^\top
    =
    \frac{(Ra)(Ra)^\top}{|a|^2}
    \succeq
    \frac{1}{\bar{a}^2}(Ra)(Ra)^\top.
\]
Thus, persistence of excitation of $Ra$ implies persistence of
excitation of $R\mathring a$. The standard persistence-of-excitation
argument of Proposition~4.6 in \cite{Trumpf2012}, applied together
with
\[
    |(\tilde R-I_3)R\mathring a|\to0,
    \qquad
    \dot{\tilde R}\to0,
\]
then yields
\[
    \tilde R(t)\longrightarrow I_3.
\]
Therefore, $\tilde R=I_3$ is asymptotically stable, and its basin of attraction contains all initial conditions satisfying
\[
    \tr(\tilde R(0))
    \geq
    1+2\cos(\theta^\star).
\]

\subsection{Proof of Lemma~\ref{lemma:1b_2a}}
\label{proof: 1b 2a}
Define
\[
    \mathring b:=\frac{b}{|b|},
    \qquad
    \nu_a:=\frac{a_1\times a_2}{|a_1\times a_2|}.
\]
Since
\[
    S=bb^\top,
\]
one has
\[
    S^\dagger
    =
    \frac{\mathring b\mathring b^\top}{|b|^2},
    \qquad
    S^\dagger b
    =
    \frac{\mathring b}{|b|}.
\]
Moreover, since $a_1$ and $a_2$ are uniformly non-collinear,
\[
    \Lambda\Lambda^\dagger
    =
    \pi_{\nu_a}:=I_3-\nu_a\nu_a^\top.
\]
Using
\[
    (\Lambda^\top)^\dagger\Lambda^\top
    =
    \Lambda\Lambda^\dagger
\]
and $\tilde R=\hat R R^\top$, the innovation
\eqref{eq:innovation_brute} becomes
\begin{align}
    \Delta
    &=
    k[\mathring b]_\times
    \tilde R\pi_{R\nu_a}
    \big(\tilde R^\top\mathring b-\mathring b\big).
    \label{eq:innovation_1b2a}
\end{align}

Consider the Lyapunov function
\[
    V:=\tr(I_3-\tilde R).
\]
Using $[[x]_\times y]_\times=yx^\top-xy^\top$, one obtains
\begin{align}
    \dot V
    &=
    -\tr([\Delta]_\times\tilde R)\nonumber\\
    &=
    -k
    \big(\tilde R^\top\mathring b-\mathring b\big)^\top
    \pi_{R\nu_a}
    \big(\tilde R^{2\top}\mathring b-\mathring b\big).
    \label{eq:Vdot_1b2a_initial}
\end{align}

Define
\[
    E:=
    \mathring b\mathring b^\top
    -
    R\nu_a\nu_a^\top R^\top .
\]
Since $\mathring b$ and $R\nu_a$ are unit vectors,
\begin{align}
    \|E\|
    &=
    \left|
    \sin\!\left(
    \angle(\mathring b,R\nu_a)
    \right)
    \right|\nonumber\\
    &=
    \left|
    \sin\!\left(
    \angle(R^\top\mathring b,\nu_a)
    \right)
    \right|
    \leq\epsilon,
    \label{eq:E_bound_1b2a}
\end{align}
where the last inequality follows from
\eqref{condition:1b 2a epsilon}. Moreover,
\[
    \pi_{R\nu_a}
    =
    \pi_{\mathring b}+E.
\]
Hence,
\[
    \dot V=V_0+V_E,
\]
where
\begin{align}
    V_0
    &:=
    -k
    \big(\tilde R^\top\mathring b-\mathring b\big)^\top
    \pi_{\mathring b}
    \big(\tilde R^{2\top}\mathring b-\mathring b\big),
    \label{eq:V0_1b2a}\\
    V_E
    &:=
    -k
    \big(\tilde R^\top\mathring b-\mathring b\big)^\top
    E
    \big(\tilde R^{2\top}\mathring b-\mathring b\big).
    \label{eq:VE_1b2a}
\end{align}

These terms have exactly the same structure as those in the proof of
Lemma~\ref{lemma:2b 1a}, with $\tilde R^\top$ in place of $\tilde R$.
Let
\[
    \tilde R
    =
    I_3+\sin(\tilde\theta)[u]_\times
    +(1-\cos(\tilde\theta))[u]_\times^2,
    \quad
    \tilde\theta\in[0,\pi],\ u\in\mathbb S^2.
\]
Repeating the arguments of Appendix~\ref{proof:2b 1a} gives
\begin{align}
    V_0
    &\leq
    -k\cos(\tilde\theta)
    \left(
    1-\mathring b^\top
    \tilde R^2\mathring b
    \right),
    \label{eq:V0_bound_1b2a}\\
    |V_E|
    &\leq
    k\frac{\epsilon}{\cos(\tilde\theta/2)}
    \left(
    1-\mathring b^\top
    \tilde R^2\mathring b
    \right).
    \label{eq:VE_bound_1b2a}
\end{align}
Therefore,
\begin{align}
    \dot V
    \leq
    -k
    \left(
    1-\mathring b^\top\tilde R^2\mathring b
    \right)
    \left(
    \cos(\tilde\theta)
    -
    \frac{\epsilon}{\cos(\tilde\theta/2)}
    \right).
    \label{eq:Vdot_final_1b2a}
\end{align}

Consider an initial condition satisfying
\[
    \tr(\tilde R(0))
    \geq
    1+2\cos(\theta^\star),
\]
or equivalently
$\tilde\theta(0)\leq\theta^\star$.
For $\tilde\theta\in[0,\theta^\star]$,
\[
    \cos(\tilde\theta)
    -
    \frac{\epsilon}{\cos(\tilde\theta/2)}
    \geq
    \cos(\theta^\star)
    -
    \frac{\epsilon}{\cos(\theta^\star/2)}
    =:c^\star>0,
\]
where the strict positivity follows from
\eqref{condition:1b 2a epsilon}. Hence,
\begin{align}
    \dot V
    \leq
    -kc^\star
    \left(
    1-\mathring b^\top\tilde R^2\mathring b
    \right)
    \leq0.
    \label{eq:Vdot_negative_1b2a}
\end{align}

Since
\[
    V=2(1-\cos\tilde\theta),
\]
the set
\[
    \{\tilde R\in\SO(3):
    \tilde\theta\leq\theta^\star\}
\]
is forward invariant. Furthermore,
\[
    1-\mathring b^\top\tilde R^2\mathring b
    \in L_1.
\]
The boundedness and uniform continuity assumptions imply that this
quantity is uniformly continuous. Hence, by Barbalat's lemma,
\[
    1-\mathring b^\top
    \tilde R^2\mathring b
    \longrightarrow0.
\]
As in the proof of Lemma~\ref{lemma:2b 1a},
\[
    \left|
    \tilde R^\top\mathring b-\mathring b
    \right|^2
    =
    \frac{
    1-\mathring b^\top\tilde R^2\mathring b
    }{
    2\cos^2(\tilde\theta/2)
    },
\]
and therefore, since
$\tilde\theta\leq\theta^\star<\pi/2$,
\[
    \left|
    (\tilde R^\top-I_3)\mathring b
    \right|
    \longrightarrow0.
\]
It follows from \eqref{eq:innovation_1b2a} that
$\Delta\to0$, and consequently
$\dot{\tilde R}\to0$.

Finally, since $b$ is bounded, there exists $\bar b>0$ such that
$|b(t)|\leq\bar b$. Since
\[
    b=|b|\mathring b,
\]
one has
\[
    \mathring b\mathring b^\top
    =
    \frac{bb^\top}{|b|^2}
    \succeq
    \frac{1}{\bar b^2}bb^\top.
\]
Thus, persistence of excitation of $b$ implies persistence of
excitation of $\mathring b$. The standard persistence-of-excitation
argument of Proposition~4.6 in \cite{Trumpf2012}, together with
\[
    |(\tilde R^\top-I_3)\mathring b|\to0,
    \qquad
    \dot{\tilde R}\to0,
\]
then yields
\[
    \tilde R(t)\longrightarrow I_3.
\]
Therefore, $\tilde R=I_3$ is asymptotically stable, and its basin of
attraction contains all initial conditions satisfying
\[
    \tr(\tilde R(0))
    \geq
    1+2\cos(\theta^\star).
\]
\bibliographystyle{IEEEtran} 
\bibliography{references} 
\end{document}